\DeclareMathAlphabet{\pazocal}{OMS}{zplm}{m}{n}
\let\mathcal\pazocal
\spnewtheorem{theorem}{Theorem}{\bfseries}{\itshape}
\spnewtheorem{cor}[theorem]{Corollary}{\bfseries}{\itshape}
\spnewtheorem{lemma}[theorem]{Lemma}{\bfseries}{\itshape}
\spnewtheorem{proposition}[theorem]{Proposition}{\bfseries}{\itshape}
\spnewtheorem{definition}[theorem]{Definition}{\bfseries}{\itshape}
\spnewtheorem{remark}[theorem]{Remark}{\bfseries}{\upshape}
\spnewtheorem{assumption}[theorem]{Assumption}{\bfseries}{\itshape}
\renewcommand{\paragraph}[1]{{\bf #1.}}
\definecolor{myred}{rgb}{0.8,0,0}  
\newcommand{\mymarginpar}[1]{ \marginpar{{\tiny #1}}}
\renewcommand{\mymarginpar}[1]{}
{\vskip\baselineskip\noindent\textbf{Proof of {#1}:}}%
{\hspace*{.1pt}\hspace*{\fill}\BOX\vskip\baselineskip}
\renewcommand{\paragraph}[1]{{\bf #1.}}
\definecolor{myred}{rgb}{0.8,0,0}  
\definecolor{mygreen}{rgb}{0,0.7,0}
\definecolor{mygrey}{rgb}{0.5,0.5,0.5}
\def \R{\mathbb{R}}               
\def \N{\mathbb{N}}               
\def \P{\mathbb{P}}             
\def \1{{\bf 1}}                
\def \0{{\bf 0}}
\def\utility{\mathcal U}
\def\qed{\hfill$\Box$}
\def\eps{\varepsilon}
\def\revlevel{\overline{\mu}}
\def\revspeed{\kappa}
\def\contexp{\zeta}
\def\driftinitial{\overline{\filter}_0}
\def\filterinitial{\filter_0}
\def\covinitial{\overline{\variance}_0}
\def\condcovinitial{\variance_0}
\def\Abound{A_{\cpsi}}
\def\Bbound{B_{\cpsi}}
\def\Cbound{C_{\cpsi}}
\def\para{\varrho}
\def\modellparameter{\mathcal P}
\def\cpsi{\gamma}    
\def\Sigmafac{P}    
\def\filter{m}
\def\variance{q}
\def\HG{G}
\def\HF{F}
\def\HR{R}
\def\HD{J}
\def \myzeta{J}
\def\contexp{J}
\def\HC{Z}
\def\nAktien{d}
\def\nWienerRendite{d_1}
\def\nWienerDrift{d_2}
\def\nWienerExperten{d_3}
\def\varianceexp{\Gamma}
\def\volR{\sigma_R}
\def\voldrift{\sigma_{\mu}}
\def\volexp{\sigma_{\myzeta}}
\def\alphamm{\alpha}
\def\betam{\beta}
\def\gammam{\gamma_{\Mpro}}
\def\alphaqq{\alpha_{\Qpro}}
\def\gammaq{\gamma_{\Qpro}}
\def\welth{X}
\def\drift{\mu}
\def\komppoi{\widetilde{N}}
\def\valuefkt{V}
\def\valueorigin{\mathcal V}
\def\rewardorigin{\mathcal D}
\def\rewardconstant{C_{V}}
\def\rewardstate{D}
\def\wealth{X}
\def\Mpro{M}
\def\Qpro{Q}
\def\qed{\hfill$\Box$}
\def\varexp{\Gamma}
\def \trace{\operatorname{tr}}
\def \E{{\mathbb{E}}}
\def \diag{\operatorname{diag}}
\def\Dfun{g}
\def\myqed{\qed}
\def\quadform{L}
\def\stateall{L}
\def\statespaceall{\mathcal{L}}
\begin{document}
	
	\title{Well Posedness of Utility Maximization Problems Under Partial Information in a Market with Gaussian Drift
	}
	\titlerunning{Well Posedness of Utility Maximization Problems Under Partial Information}
	\author{Abdelali Gabih \and Hakam Kondakji \and Ralf Wunderlich}
	
	\authorrunning{A.~Gabih,  H.~Kondakji, R.~Wunderlich} 
	
	\institute{Abdelali Gabih \at
		Equipe de Modélisation et Contrôle des Systèmes Stochastiques et Déterministes, 
		Faculty of Sciences,  Chouaib Doukkali University, El Jadida 24000, Morocco,  \email{\texttt{a.gabih@uca.ma}}
		\and	
		Hakam Kondakji \at {Westfälische Hochschule, Fachbereich Wirtschaft und Informationstechnik, Münsterstr. 265,
			46397 Bocholt, Germany,	}
		\email{\texttt{hakam.kondakji@w-hs.de}}
		\and 
		Ralf Wunderlich \at
		Brandenburg University of Technology Cottbus-Senftenberg, Institute of Mathematics, P.O. Box 101344, 03013 Cottbus, Germany,
		\email{\texttt{ralf.wunderlich@b-tu.de}}
	}

	\date{Version of \today}
	\maketitle
	
	\begin{abstract}    
		This paper investigates well posedness of utility maximization problems for financial markets where stock returns depend on a hidden Gaussian mean-reverting drift process. Since that process is potentially unbounded, well posedness cannot be guaranteed  for utility functions which are not bounded from above. For power utility with relative risk aversion smaller than that of log-utility this leads to restrictions on the choice of model parameters such as the investment horizon and parameters controlling the variance of the asset price and drift processes. We derive sufficient conditions to the model parameters leading to bounded  maximum expected utility of terminal wealth for models with full and partial information. 
	\end{abstract}
	\subclass{91G10 \and 93E11 \and  60G35 }
	\keywords{Utility maximization, Partial		information, Well posedness,  Stochastic optimal control, Ornstein-Uhlenbeck process  }

	\section{Introduction}
	\label{introduction}
	
	In this paper we investigate utility maximization problems  for a
	financial market where asset prices follow a diffusion process with
	an unobservable Gaussian mean-reverting drift  modelled by an
	Ornstein-Uhlenbeck process. It is  motivated by our papers \cite{Gabih et al (2022) PowerFixed,Gabih et al (2022) PowerRandom} where we examine in detail the maximization of expected power utility of terminal wealth which is treated as a stochastic optimal control problem  under partial information. A special feature of these papers is that for the construction of  optimal portfolio strategies investors estimate the unknown drift not only from observed asset prices. They  also  incorporate external sources of information such as news, company reports, ratings or their own intuitive views on the
	future asset performance.   These outside sources of information  are called
	expert opinions. 
	
	In the present paper, we focus on the well posedness of the above stochastic control problem,   which is often overlooked or taken for granted  in the literature.  For Gaussian drift processes which are potentially unbounded, well posedness in general cannot be guaranteed  for utility functions which are not bounded from above.
	This is the case for log-utility and power utility with relative risk aversion smaller than that of log-utility. For log-utility well posedness can be shown quite easily and holds without restriction to the model parameters. However, the case of power utility is much more demanding and leads to restrictions on the choice of model parameters such as the investment horizon, the risk aversion  parameter of the power utility function, parameters controlling the variance of the asset price and drift processes,  and the filter process describing the conditional covariance of the Kalman filter.	
	
	\paragraph{Literature review}
	The above phenomenon was already observed in Kim and
	Omberg \cite{Kim and Omberg (1996)} for a financial market with an observable drift modeled by an Ornstein-Uhlenbeck process. They  coined the terminology \textit{nirvana strategies}. Such strategies generate in finite time  a
	terminal wealth with a distribution leading to infinite expected utility. Note that this is a property of the distribution of terminal wealth  and realizations of terminal wealth   need not to be infinite. The same holds for  the generating strategies which  might be even suboptimal.    That phenomenon was also observed in Korn and Kraft \cite[Sec. 3]{Korn and Kraft (2004)} who coined it ``I-unstability'', in  Angoshtari \cite{Angoshtari2013,Angoshtari2016}  and Lee and Papanicolaou  \cite{Lee Papanicolaou (2016)} who studied  power utility maximization problems and their well posedness for financial market models with  cointegrated  asset price processes and in Battauz et al.~\cite{Battauz et al (2017)} for markets with defaultbale assets. For the case of partial information  Colaneri et al.~\cite{Colaneri et al (2021)} provides some results   for markets with a single risky asset ($d=1$). Kim and Omberg \cite{Kim and Omberg (1996)} also pointed out that financial market models allowing investors to attain nirvana do not properly reflect reality. Thus, there are not only mathematical reasons to exclude combinations of model parameters allowing for attaining nirvana, i.e., not ensuring well-posed optimization problems. This problem is addressed in the present  paper and we derive sufficient conditions to the model parameters leading to bounded  maximum expected utility of terminal wealth for portfolio selection problems under full and partial information.

	\smallskip
	Portfolio selection problems for market models with partial information on the drift have been intensively studied in the last years. We refer to Lakner \cite{Lakner (1998)} and  Brendle \cite{Brendle2006} for models with Gaussian drift, and to Rieder and Bäuerle \cite{Rieder_Baeuerle2005}, Sass and Haussmann \cite{Sass and Haussmann (2004)} for models in which the drift is described by a continuous-time hidden Markov chain. A generalization of these approaches and further references can be found  in Björk et al.
	\cite{Bjoerk et al (2010)}. 
	
	Utility maximization problems for investors with logarithmic preferences in market models with non-observable Gaussian drift process   and discrete-time expert opinions are addressed in a series of papers \cite{Gabih et al (2014),Gabih et al (2019) FullInfo,Sass et al (2017),Sass et al (2021),Sass et al (2022)} of the present authors and of Sass and Westphal. The case of continuous-time expert opinions and power utility maximization is treated in a series of papers by Davis and Lleo, see \cite{Davis and Lleo (2013_1),Davis and Lleo (2020),Davis and Lleo (2022)}. For models with drift processes described by  continuous-time hidden Markov chains and power utility maximization we refer to  Frey et al.~\cite{Frey et al. (2012),Frey-Wunderlich-2014}. Finally, the computation of optimal strategies using dynamic programming methods for the   power utility maximization problems addressed in this paper can be found in our  papers \cite{Gabih et al (2022) PowerFixed,Gabih et al (2022) PowerRandom}. 
	
	\paragraph{Our contribution}
	The paper addresses well posedness of power utility maximization problems under partial information on the not directly observable drift of risky assets.  It derives sufficient conditions to the model parameters ensuring bounded objective functions, and  under which the dynamic programming approach can be applied for their solution. Such conditions are often taken for granted or overlooked and restrict the choice of model parameters for investors which are less risk averse than the	log-utility investors.

	To the best of our knowledge, our results for the case of multi-asset markets and partial information are new to the literature. They extend known results for the corresponding optimization problems under full information to the case of partial information. A  first main result is Theorem \ref{theo_bound_V}   providing an upper bound for the  expected  utility of terminal wealth expressed in terms of the solution to some matrix Riccati differential equation, and involving the  current value of the non-observable drift.  At first glance this approach does not seem very practical, but there is a rich specialist literature on properties of solutions  Riccati equations, see  Kucera \cite{Kucera (1973)}, Roduner \cite{Roduner (1994)}, and Wonham \cite{Wonham (1968)}, and  the references  therein, and nowadays numerical solutions  are standard. 	
	The  result of Theorem \ref{theo_bound_V} allows to 
	deduce sufficient conditions to the model parameters ensuring the well posedness of the utility maximization problem under full information in Corollary \ref{suff_cond_full}.
	The respective conditions for the case of partial information follow from the  projection of the above upper bound  on the  investor filtration of the partially informed investor an lead to our second main result given in Theorem \ref{theorem_partial_Inv}. It allows to derive well posedness conditions for the problem under partial information Corollary \ref{suff_cond_partial}.  We also provide numerical results to illustrate the theoretical findings for a market model with a single risky asset. Here,  the sufficient conditions for well posedness  become quite explicit. This allows an insightful visualization of the set of feasible model parameters.

	Note that the actual solution of the optimization problems is not intended but already studied, e.g., in our  papers \cite{Gabih et al (2022) PowerFixed,Gabih et al (2022) PowerRandom}. 	The derived results appear to be helpful for the analysis of portfolio selection problems under partial information in general, and not only to the specific situation where investors draw information for estimating unobservable drifts from return observations which are combined with additional information from expert opinions, as in this work.

	\smallskip
	{\paragraph{Paper organization}}
	In Section \ref{market_model} we introduce the financial market model with partial  information on the drift and formulate the portfolio optimization problem. The well posedness of that problem is studied in Section \ref{sec_wellposedness}. We derive sufficient conditions to the model parameters ensuring the well posedness of the utility maximization problem under full as well as partial information. 
	These conditions become quite explicit for market models with a single risky asset which are considered in Subsection \ref{WellPosedSpecialCase}.  Section \ref{numeric_result} illustrates the theoretical findings by results of some numerical experiments and  visualizes the derived restrictions on the model parameters. The appendix collects proofs which are removed from the main text.

	\smallskip
	\paragraph{Notation} Throughout this paper, we use the notation $I_d$ for the identity matrix in $\R^{d\times d}$,  $0_{d}$
	denotes the null vector in $\R^d$, $0_{d\times m}$ the null matrix
	in $\R^{d\times m}$. For a symmetric and positive-semidefinite
	matrix $A\in\R^{d\times d}$ we call a symmetric and
	positive-semidefinite matrix $B\in\R^{d\times d}$ the \emph{square
		root} of $A$ if { $B^2=A$}. The square root is unique and will be
	denoted by $A^{1/2}$. For a generic process $X$ we denote by
	$\mathbb{G}^X$ the filtration generated by $X$.
	
	\section{Financial Market and Optimization Problem}
	\medskip
	
	\label{market_model}
	\subsection{Price Dynamics}
	\label{PriceDynamics}   Our financial market model comprises one
	risk-free and multiple risky assets. The  setting is based on Gabih
	et al.~\cite{Gabih et al (2014),Gabih et al (2019) FullInfo} and
	Sass et al.~\cite{Sass et al (2017),Sass et al (2022),Sass et al
		(2021)}  and also used in our   papers \cite{Gabih et al (2022) PowerFixed,Gabih et al (2022) PowerRandom}. For a  fixed date $T>0$ representing the investment
	horizon, we work on a filtered probability space
	$(\Omega,\mathcal{G},\mathbb{G},\P)$, with filtration
	$\mathbb{G}=(\mathcal {G}_t)_{t \in [0,T]}$ satisfying the usual
	conditions. All processes are assumed to be $\mathbb{G}$-adapted.

	We consider  in our  market model   discounted asset prices with  the risk-free asset as numéraire. Then the  risk-free asset has a 	constant price $S^0_t=1$. Further, there are  $\nAktien$ risky securities whose excess log-return or risk premium process
	$R=(R^{1},\ldots,R^{\nAktien})$ is defined by the SDE
	\begin{align}
		dR_t=\mu_t\, dt+\volR\, dW^{R}_t, \label{ReturnPro}
	\end{align}
	for a given $\nWienerRendite$-dimensional $\mathbb{G}$-adapted
	Brownian motion $W^{\HR}$ with  $\nWienerRendite\geq\nAktien$. The
	volatility matrix $\volR\in\mathbb
	R^{\nAktien\times\nWienerRendite}$ is  assumed to be constant over
	time such that $\Sigma_{R}:=\volR\volR^{\top}$ is positive definite.  In the remainder of this paper we will call $R$ simply \textit{returns}.
	In this setting the discounted price process $S=(S^1,\ldots,S^{\nAktien})$ of
	the risky securities reads as
	\begin{align}
		dS_t&=diag(S_t)\, dR_t,~~ S_0=s_0, \label{stockmodel}
	\end{align}
	with some fixed initial value $s_0=(s_0^1,\ldots,s_0^d)$. Note that
	for the solution to the above SDE it holds
	\begin{align*}
		\log S_t^{i}-\log s_0^{i} &= \int\limits_0^t \drift_s^{i}ds
		+\sum\limits_{j=1}^{\nWienerRendite}\Big(
		\sigma_R^{ij}W_t^{R,j}-\frac{1}{2} (\sigma_R^{ij})^2 t\Big)
		=R_t^{i}-\frac{1}{2}\sum\limits_{j=1}^{\nWienerRendite}
		(\sigma_R^{ij})^2 t ,\quad i=1,\ldots,\nAktien.
	\end{align*}
	So we have the equality $\mathbb{G}^R = \mathbb{G}^{\log S} =
	\mathbb{G}^S$. 
	
	The dynamics of the drift process $\mu=(\mu_t)_{t\in[0,T]}$ in \eqref{ReturnPro}
	are
	given by the stochastic differential equation (SDE)  defining an Ornstein--Uhlenbeck process
	\begin{align}
		\label{drift} d\mu_t=\revspeed(\revlevel-\mu_t)\, dt+\voldrift\,
		dW^{\mu}_t,
	\end{align}
	where $\revspeed\in\mathbb R^{\nAktien\times\nAktien}$,
	$\voldrift\in\mathbb R^{\nAktien\times\nWienerDrift}$ and
	$\revlevel\in\mathbb R^{\nAktien} $ are constants. We require that   all eigenvalues of  $\revspeed$ have a positive real part (that is, $-\kappa$ is a stable matrix)  and that $\Sigma_{\mu}:=\voldrift\voldrift^{\top}$
	is positive definite. Further, $W^{\mu}$ is a
	$\nWienerDrift$-dimensional Brownian motion such that
	$\nWienerDrift\geq\nAktien$.   For the sake of simplification and 	shorter notation, we assume  that the Wiener processes $W^{R}$ and 	$W^{\mu}$ driving the return and drift process, respectively, are 	independent.  For the general case we refer to Brendle \cite{Brendle2006},  Colaneri et
	al. \cite{Colaneri et al (2021)} and  Fouque et al.~\cite{Fouque et
		al. (2015)}. Here, $\revlevel$ is the
	mean-reversion level, $\revspeed$ the mean-reversion speed and
	$\voldrift$ describes the volatility of $\mu$. The initial value
	$\drift_0$ is assumed to be a normally distributed random variable
	independent of $W^{\mu}$ and $W^{R}$ with mean $\driftinitial\in
	\R^{\nAktien}$ and covariance matrix $\covinitial\in\mathbb
	R^{\nAktien\times\nAktien}$ assumed to
	be symmetric and  positive semi-definite. 

	\subsection{Partial Information}\label{Expert_Opinions}  
	Our mathematical market model reflects the fact that investors in real financial markets do not have full access to market information.
	They can instead observe the historical data of the return process $R$,	 but they neither observe the factor process $\mu$ nor
	the Brownian motion $W^{R}$. Further, investors know the	model parameters	such as $\volR,\revspeed, \revlevel, \voldrift $  and
	the distribution $\mathcal{N}(\driftinitial,\covinitial)$ of  the
	initial value $\drift_0$. 
	
	Information about the drift $\mu$ can be
	drawn from observing the  asset prices from which the returns $R$ can be derived.  However,	
	it is well-known that estimating the drift  with a reasonable degree of precision based only on historical asset prices is known to be almost impossible. This is nicely described in   Rogers \cite[Chapter 4.2]{Rogers (2013)}. Here, the author analyzes that problem  for a model in which the drift is even constant. Reliable estimate require extremely long time series of data  which are usually not available. Furthermore, the assumption of a constant drift over longer periods of time is rather unrealistic. Drifts tend to fluctuate randomly over time and drift effects are often overlaid by volatility.

	For	these reasons, portfolio
	managers and traders also rely on external sources of information such as news, company reports, ratings and benchmark values. Further, they increasingly turn to  data outside of the traditional sources that companies and financial markets provide. Examples are social media posts, internet search, satellite imagery, sentiment indices, pandemic data,  product review trends and are often related to Big Data analytics.
	
	In the literature, these external sources of information are referred to as \textit{expert opinions} or more generally as \textit{alternative data}, see Chen and Wong \cite{Chen Wong (2022)}, Davis and Lleo \cite{Davis and Lleo (2022)}. We use the first term here. After appropriate mathematical modeling, they are included as additional noisy observations in the drift estimation and the construction of optimal portfolio strategies.	This approach goes back to Black and	Litterman~\cite{Black_Litterman (1992)} and their celebrated Black-Litterman model, which is an extension of the classic one-period Markowitz model.

	A first modeling approach considers expert opinions  as noisy signals about
	the current state of the drift arriving at discrete time points forming an increasing sequence $(T_k)_{k\in \mathbb I}$  with values in $[0,T]$  and $\mathbb I\subseteq\N_0$. The literature distinguishes between a given finite number of deterministic time points as in \cite{Gabih et al (2014),Gabih et al (2022) PowerFixed,Sass et al (2017),Sass et al (2022)} or random time points which are the jump times of a Poisson process with some given intensity as in \cite{Gabih et al (2019) FullInfo,Sass et al (2022)}.    
	The signals or ``expert views'' at time $T_k$ are modelled by
	$\R^\nAktien$-valued  Gaussian random vectors
	$Z_k=(Z_k^1,\ldots,Z_k^{\nAktien})^{\top}$ with
	\begin{align}
		\label{Expertenmeinungen_fest}
		Z_k=\drift_{T_k}+{\varianceexp}^{\frac{1}{2}}\varepsilon_k,\quad
		k\in\N_0,
	\end{align}
	where the matrix  $\varianceexp\in\R^{\nAktien\times\nAktien}$ is
	symmetric and positive definite.
	Further,  $(\varepsilon_k)_{\in\N_0}$ is a sequence of independent
	standard normally distributed random vectors, i.e.,
	$\varepsilon_k\sim \mathcal{N}(0,I_d)$. 
	It is  also independent of
	both the Brownian motions $W^R, W^\mu$ and the initial value $\mu_0$
	of the drift process. Thus given $\mu_{T_k}$ the expert
	opinion $Z_k$ is $\mathcal{N}(\mu_{T_k},\varianceexp)$-distributed.
	So, $Z_k$ can be considered as an unbiased estimate of the unknown
	state of the drift at time $T_k$. The matrix $\varianceexp$ is a
	measure of the expert's reliability.  Its diagonal entries $\varexp^{ii}$
	are just the variances of the expert's
	estimate of the drift component $\drift^i$ at time $T_k$.  The larger $\varexp^{ii}$, the less
	reliable the expert's estimate is.

	Expert opinions can also take the form of relative views, which are estimates of the difference in drift between two stocks rather than an absolute view of the drift of a single stock. We refer for this extension  to 	Sch\"ottle et al.~\cite{Schoettle et al. (2010)} where the authors
	show how to switch between these two models for expert opinions by	means of a pick matrix.
	
	
	In a second modeling approach the expert opinions do not arrive  at discrete time points  but continuously over time as in
	the BLCT model of Davis and Lleo \cite{Davis and Lleo
		(2013_1),Davis and Lleo (2020),Davis and Lleo (2022)}. This is motivated by the results of
	Sass et al. who show in \cite{Sass et al (2022)} for $n\in\N$ periodically arriving expert opinions that  for increasing number $n$ and an  expert's variance $\Gamma$
	growing linearly in $n$, asymptotically for $n\to \infty$   the information drawn from these expert opinions is  essentially
	the same as the information one gets from observing yet another
	diffusion process. A similar result is obtained in    \cite{Sass et al (2021)} for randomly arriving expert opinions with increasing 	arrival intensity $\lambda$ and an  expert's variance $\Gamma$	growing linearly $\lambda$.	
	This above mentioned diffusion process can then be interpreted as
	an expert who gives a continuous-time estimation about the current
	state of the drift. Let this estimate be given by the diffusion
	process
	\begin{align}\label{continuous-expert}
		d\contexp_t = \mu_t dt +\volexp dW_t^{\contexp},
	\end{align}
	where $W_t^{\contexp}$ is a $\nWienerExperten$-dimensional Brownian
	motion independent of $W_t^R$ and $W^{\mu}$ and such that with
	$\nWienerExperten\geq\nAktien$. The volatility   matrix
	$\volexp\in\mathbb R^{\nAktien\times\nWienerExperten}$ is assumed to
	be constant over time such that the matrix
	$\Sigma_{\contexp}:=\volexp\volexp^{\top}$ is positive definite.
	In \cite{Gabih et al (2022) PowerFixed}, we show that, based on
	this model and on the diffusion approximations provided in
	\cite{Sass et al (2022),Sass et al (2021)},    efficient
	approximative solutions can be found to utility maximization problems for
	partially informed investors observing high-frequency discrete-time
	expert opinions.
	
	\subsection{Investor Filtration}
	\label{Investor_Filtration}  
	In view of the different levels of information on the financial markets, we consider different types of investors. Mathematically, the information available to an investor can be described by the \textit{investor filtration} 	$\mathbb{F}^H=(\mathcal{F}^H_t)_{t\in[0,T]}$. Here, $H$ denotes the 	information regime for which we
	consider the cases $H=\HR, Z_n,Z_\lambda ,\HD,\HF$  and the investor with filtration 	$\mathbb{F}^H=(\mathcal{F}^H_t)_{t\in[0,T]}$ is called the $H$-investor.
	The $\HR$-investor only observes the return process $R$.  The $Z_n$-investor combines	observations of returns with the discrete-time expert opinions
	$Z_k$  arriving at $n\in \N$ known and fixed arrival times  $ 0\le t_0<\ldots <t_{n-1}  <T$, while the $Z_\lambda$-investor combines returns with   expert opinions
	$Z_k$  arriving at the random jump times of an homogeneous Poisson process with intensity $\lambda>0$.	
	The $\HD$-investor observes the return process together with the continuous-time expert $\HD$. Finally, the $\HF$ investor is fully informed and can directly observe the drift process $\mu$ and, of course, the return process.
	In a market with stochastic drift, this case is not realistic, but we use it as a benchmark.

	The $\sigma$-algebras $\mathcal{F}^H_t$ representing the  $H$-investor's information  at time $t\in[0,T]$ are defined at initial time  $t=0$ by $\mathcal{F}^\HF_0=\sigma\{\drift_0\}$ for the fully informed investor,    and by  $\mathcal{F}^H_0=\mathcal{F}^I_0\subset \mathcal F_0^{\HF}$ for $H=\HR, Z_n,Z_\lambda ,\HD$, i.e., for the partially informed investors. Here, $\mathcal{F}_0^I$ denotes the $\sigma$-algebra representing   prior information about the initial drift $\mu_0$,   e.g., from
	observing  returns  or expert opinions in the past, before the
	trading period $[0,T]$. Note that all partially informed investors ($H=\HR, Z_n,Z_\lambda ,\HD$) start  at $t=0$ with the same initial information given by
	$\mathcal{F}_0^I$. For $t\in (0,T]$ we define 
	\[\begin{array}{rl}
		\mathcal {F}_t^{\HR}&=\sigma(R_s,~ s\le t)  \vee \mathcal{F}_0^I, \\[0.5ex]
		\mathcal {F}_t^{Z_n}&= \sigma(R_s, s\le t,\,~(Z_k),~ {t_k}\le t) \vee \mathcal{F}_0^I, \\[0.5ex]
		\mathcal {F}_t^{Z_\lambda}&=\sigma(R_s, s\le t,\,~({T_k},Z_k),~ {T_k}\le t) \vee \mathcal{F}_0^I, \\[0.5ex]  
		\mathcal {F}_t^{\HD}&=\sigma(R_s,\contexp_s,~ s\le t) \vee \mathcal{F}_0^I, \\[0.5ex]
		\mathcal {F}_t^{F}&=\sigma(R_s, \mu_s,~ s\le t).
	\end{array}
	\]
	It is assumed that the above $\sigma$-algebras $\mathcal{F}_t^H$	are augmented by the null sets 
	of $\P$.  {Further, we} assume that the conditional distribution
	of the initial value drift $\mu_0$ given $\mathcal{F}_0^{ I}$ is the
	normal distribution $\mathcal{N}(m_0,q_0)$ with mean
	$\filterinitial\in \R^{\nAktien}$ and covariance matrix
	$\condcovinitial\in\mathbb R^{\nAktien\times\nAktien}$ assumed to be
	symmetric and  positive semi-definite.  For more details and examples we refer to \cite[Sec. 2.3]{Gabih et al (2022) PowerFixed}.

	\subsection{Drift Estimates and Filtering}
	\label{Filtering}  
	The investors' trading decisions are based on their knowledge of the drift process $\mu$. The fully informed 
	$F$-investor observes the drift directly, the partially informed $H$-investor for
	$H=R, Z_n,Z_\lambda ,\HD$ must estimate the drift. This is a filtering problem 
	with hidden signal process $\mu$ and observations given by
	the returns $R$  for $H=\HR$, the returns combined with the sequence of  expert opinions  $(Z_k)$ for $H=Z_n$, or the double sequence of arrival times and expert opinions 
	$(T_k,Z_k)$ for $H=Z_\lambda$, or the continuous-time expert  $J$ for $H=\HD$. The \textit{filter} for the drift  $\mu_t$ is its
	projection on the $\mathcal{F}_t^H$-measurable random variables
	described by the conditional distribution of the drift given
	$\mathcal{F}_t^H$. The mean-square optimal estimator for the drift
	at time $t$, given the available information is  known to be  the
	\textit{conditional mean}  
	$$\Mpro_t^{H}:=\E[\mu_t\mid\mathcal{F}_t^H].$$
	The estimator's accuracy  can be described by the
	\textit{conditional covariance matrix}
	\begin{align}\nonumber
		\Qpro_t^{H}:=\E[(\mu_t-\Mpro^{H}_t)(\mu_t-\Mpro^{H}_t)^{\top}\mid\mathcal{F}^{H}_t].
	\end{align}
	In our market model  the signal   $\mu$, the observations
	and the initial value of the filter  are jointly Gaussian. Therefore,  we are in the setting of the Kalman filter, and  the
	conditional distribution of the drift at time $t$  is Gaussian, which is completely characterized by the
	conditional mean $\Mpro_t^{H}$ and the conditional covariance	$\Qpro_t^{H}$. For the associated filter equations describing the dynamics of the filter processes $\Mpro^{H}$ and $\Qpro^{H}$ we refer to \cite{Gabih et al (2022) PowerFixed,Sass et al (2017),Sass et al (2022)}  for the case of expert opinions arriving at fixed arrival times and to  \cite{Gabih et al (2019) FullInfo,Gabih et al (2022) PowerRandom,Sass et al (2022)} for random arrival times which are the jump times of a Poisson process. While $\Mpro^H$ solves a SDE driven by the return process with random jumps at the expert's arrival dates, the conditional covariance $\Qpro^H$ is governed by a Riccati ODE between the arrival dates and exhibits jumps at the arrival dates. The jump sizes are a deterministic function of $\Qpro^H$  before the jump.  Thus,  for fixed arrival times,  $\Qpro^H$ is deterministic and can be computed offline already in advance whereas for random arrival times,   $\Qpro^H$ is only piecewise deterministic. Then it has to be computed online and to be included into the state of associate control problems.

	\subsection{Portfolio and Optimization Problem}
	The self-financing trading of an investor can be described by the initial
	capital $x_0>0$ and the  $\mathbb{F}^H$-adapted trading strategy
	$\pi=(\pi_t)_{t\in[0,T]} $ with $\pi_t\in\R^{\nAktien}$. The $i$-th component 
	$\pi_t^{i}$ represents the proportion of the current portfolio wealth invested in the
	$i$-th stock at time $t$.  The assumption that $\pi$ is
	$\mathbb{F}^H$-adapted reflects that investment decisions have to be
	based only on information available to the $H$-investor.  These are observations 
	of assets returns for $H=R$,  returns combined with expert
	opinions for $H= Z_n,Z_\lambda ,\HD$, or returns combined with the drift process for $H=F$. Following
	the strategy $\pi$   the investor generates a wealth process
	$(X_t^{\pi})_{t\in [0,T]}$ whose dynamics  reads as
	\begin{align} \label{wealth_phys}
		\frac{dX_t^{\pi}}{X_t^{\pi}}= \pi_t^{\top}dR_t &= 
		\pi_t^{\top}\mu_t\; dt+\pi_t^{\top}\volR\; dW_t^{R},\quad
		X_0^{\pi}=x_0.
	\end{align}
	We denote by
	\begin{equation}
		\label{set_admiss_0} \mathcal{A}^H=\Big\{\pi= (\pi_t)_{t}  \colon
		\pi_t\in\mathbb R^{\nAktien}, \text{ $\pi$ is $\mathbb{F}^H$-adapted
		}, X^\pi_t > 0,\, \E\Big[ \int\nolimits_0^T \|\pi_t\|^2\, dt
		\Big]<\infty \Big\}
	\end{equation}%
	the  class of {\em admissible trading strategies}. The investor aims to maximize  expected utility of terminal
	wealth using a utility function $ \utility : \R_+\rightarrow\R$ which 
	models the risk aversion of the investor. Here, we 	use  the power utility function
	\begin{align}
		\label{util_def}
		\utility_{\theta}(x):=\frac{x^{\theta}}{\theta},\quad
		\theta\in(-\infty,0)\cup(0,1).
	\end{align}
	As limiting case for $\theta\rightarrow 0$ the family of power utility
	function contains the logarithmic utility $\utility_0(x):= \log  x$, 
	since we have
	$\utility_{\theta}(x)-\frac{1}{\theta}=\frac{x^{\theta}-1}{\theta}
	\xrightarrow[~\theta\rightarrow 0~]{\text{}} \log x.$
	The optimization problem thus reads as
	\begin{align}
		\label{opti_org} 		 
		\mathcal V_0^H:=\sup\limits_{\pi\in\mathcal{A}^{H}}	\rewardorigin_0^H(\pi) \quad \text{with}\quad
		\rewardorigin_0^H(\pi) =	\E\left[\utility_{\theta}(X_T^{\pi})~|~\mathcal{F}^H_0 	\right], \quad \pi\in\mathcal A^{H},
	\end{align}
	where we call $\rewardorigin_0^H(\pi)$ \textit{reward}
	or \textit{performance} of the strategy $\pi$ and $
	\mathcal V_0^H$ the \textit{value} of the problem to given model parameters, in particular to given initial
	capital $x_0$.
	For $H\neq F$ this is an optimization   problem under partial
	information since we have required that the strategy $\pi$ is
	adapted to the investor filtration $\mathbb F^{H}$. However, the drift
	coefficient of the wealth equation \eqref{wealth_phys} contains the non-observable drift $\mu$ and is therefore 
	not	$\mathbb F^{H}$-adapted. For $x_0 > 0$ the solution of the SDE
	\eqref{wealth_phys} is strictly positive. This ensures that the terminal wealth 
	$X_T^{\pi}$ is in the domain of logarithmic and power utility.
	
	For problems of the above type, in the literature as outlined in Sec.\ref{introduction}, dynamic programming is a powerful solution method which is frequently applied. The key idea is to embed the optimization problem \eqref{opti_org} into a family of problems in which the initial date is moved from $t=0$ to an arbitrary time point $t\in[0,T]$, and the initial value of the wealth process  $X_0^\pi=x_0$, as well as those of other state processes included in the analysis, are replaced by the respective values of the states at time $t$. Then one ties all these problems together and derives a partial differential equation known as the Hamilton-Jacobi-Bellman (HJB) equation. 	
	
	We introduce for a fixed strategy $\pi\in\mathcal A^{H}$ the notation $\mathcal{F}^{H,X}_t=\mathcal{F}^H_t \vee \sigma\{X_t^\pi\}$ and note that $\mathcal{F}^{H,X}_0=\mathcal{F}^H_0$ since $X_0^\pi=x_0$ is the given and fixed initial capital. Then the  optimization problems of the above mentioned  family are indexed by   time $t\in[0,T]$ and read
	\begin{align}
		\label{opti_t} \mathcal
		V_t^H:=\sup\limits_{\pi\in\mathcal{A}^{H}}	\rewardorigin_t^H(\pi) \quad  \text{with} \quad 
		\rewardorigin_t^H(\pi) =
		\E\left[\utility_{\theta}(X_T^{\pi})~|~ \mathcal{F}^{H,X}_t  \right],\quad \pi\in\mathcal A^{H}.
	\end{align}		
	To follow the dynamic programming approach, for the information regimes $H=R, Z_n,Z_\lambda ,J$ with partial information, an appropriate choice the state of the control problem is the triple $Y^H=(X,\Mpro^H,\Qpro^H)$, which takes values in the state space $\mathcal{Y}^H=(0,\infty)\times \R^d\times \R^{d\times d}$.  Note that the conditional variance process $\Qpro^H$ for the information regimes $H=R, Z_n,J$  is deterministic. Thus it can be computed offline and also removed from the state process $Y^H$, as in  \cite{Gabih et al (2022) PowerFixed}.	
	For the regime $H=Z_\lambda$ with random information dates, however, $\Qpro^H$ is a stochastic process and must be included in $Y^H$,  see \cite{Gabih et al (2022) PowerRandom}. For the regime with  full information ($H=F$), the pair $Y^F=(X,\mu)$ is chosen as the state variable, which takes values in the state space $\mathcal{Y}^F=(0,\infty)\times \R^d$. The next lemma shows that the  conditional expectation $\rewardorigin_t^H(\pi)$ defining the performance of an admissible  strategy $\pi$ of the control problem in \eqref{opti_t}  can be expressed as a conditional expectation given the state $Y^H_t$ at time $t$.
	\begin{lemma}		\label{lem:reward_condexp}
		For the conditional expectation $\rewardorigin_t^H(\pi) = \E\left[\utility_{\theta}(X_T^{\pi})~|~ \mathcal{F}^{H,X}_t  \right]$ it holds for $H=R, Z_n,Z_\lambda ,J,F$ and all  $\pi\in\mathcal{A}^{H}$
		\begin{align}
			\label{reward_general} 
			\rewardorigin_t^H(\pi) = \rewardstate^H(t,Y^H_t;\pi)\quad \text{with}\quad  \rewardstate^H(t,y;\pi)=\E\left[\utility_{\theta}(X_T^{\pi})~|~Y_t^H =y\right]. 
		\end{align}	
	\end{lemma}		
	The proof can be found in Appendix \ref{proof_em:reward_condexp}. The function 
	\begin{align}
		\label{reward_state} 
		\rewardstate^H(t,y;\pi)
		= \E\left[\utility_{\theta}(X_T^{\pi})~|~Y^H_t = y  \right],\quad \pi\in\mathcal A^{H},
	\end{align}
	is called \textit{reward function} or \textit{performance criterion} for the strategy $\pi$, and 
	\begin{align}
		\label{value_state} 
		V^H(t,y) = \sup\limits_{\pi\in\mathcal{A}^{H}}   \rewardstate^H(t,y;\pi)
	\end{align}
	\textit{value function} for the family of optimization problems \eqref{opti_t}.

	\section{Well Posedness of the Optimization Problem}
	\label{sec_wellposedness}
	\subsection{Well Posedness}
	\label{Well Posedness}  Solving the  utility maximization problem
	\eqref{opti_org} for the various information regimes
	$H=\HR, Z_n,Z_\lambda ,\HD,\HF$ requires conditions under which the optimization
	problem
	is well posed.  Under these conditions the maximum expected
	utility  of terminal wealth cannot explode in finite time as it is
	the case for  so-called nirvana strategies described in Kim and
	Omberg \cite{Kim and Omberg (1996)} and Angoshtari
	\cite{Angoshtari2013}. Such strategies generate in finite time  a
	terminal wealth with a distribution leading to infinite expected
	utility although the realizations of terminal wealth  may be finite.

	We start by describing the model of the financial market via the
	parameter
	\begin{align}
		\para:=\{T,\theta,d,\volR,\sigma_{\mu},\revspeed,\revlevel,x_0,\driftinitial,\covinitial,
		\filterinitial,\condcovinitial \} \label{modellparameter_1}
	\end{align}
	taking values  in a suitable chosen set of parameter values
	$\modellparameter$. For emphasizing the dependence on the parameter
	$\para$ we rewrite  \eqref{reward_state} and \eqref{value_state}   for $t\in[0,T]$ as
	\begin{align}
		\label{opti_org2} 
		\begin{split}
			\rewardstate^H_\para(t,y;\pi)
			& = \E\left[\utility_{\theta}(X_T^{\pi})~|~Y^H_t = y,\para  \right],\quad \pi\in\mathcal A^{H},\\[0.5ex]
			V^H_\para(t,y) & = \sup\limits_{\pi\in\mathcal{A}^{H}}   \rewardstate^H_\para(t,y;\pi).
		\end{split}
	\end{align}		
	For a given paramter $\para$ we want to study if the
	performance criterion  of the optimization problem \eqref{opti_org2}
	is well-defined in the following sense.
	\begin{definition}
		\label{def_wohlgesteltheit} For a given financial market with
		parameter $\para\in \modellparameter$ we say
		that the utility maximization problem \eqref{opti_org2} for the
		$H$-investor is \textit{well-posed}, if  $V^H_\para(t,y)$ is finite for every $(t,y)\in[0,T]\times \mathcal{Y}^H$.
		The set
		\begin{align}
			\nonumber 
			\modellparameter^{H}=\{\para \in\modellparameter:~
			\text{problem }  \eqref{opti_org2} \text{ is well posed} \}  \subset
			\modellparameter
		\end{align}
		is called  set of \textit{feasible  parameters} of the financial
		market  model for which \eqref{opti_org2} is well-posed.
	\end{definition}
	
	\subsection{Log-Utility and Power  Utility with ${\theta<0}$}
	\label{well_posed_log}
	For power  utility with parameter $\theta<0$ it holds
	$\utility_\theta(x)<0$.  Thus,  $V^H_\para(t,y) $ is bounded above by zero. Further, for the constant buy-and-hold strategy $\pi=0_d$, in which the initial capital is invested exclusively in the bond, it holds $X_t^{\pi}=x_0$ for all $t\in[0,T]$. Then it follows  $\utility_\theta(x_0)=\rewardstate^H_\para(t,y;0_d)\le V^H_\para(t,y)$, and shows, that  $V^H_\para(t,y) $ is also bounded below and is therefore finite.	
	Hence,  the optimization problem is well-posed 	for all model parameters $p\in\modellparameter$ with $\theta<0$. For
	log-utility ($\theta=0$) the utility function is no longer bounded
	from above but  it is shown in 	\cite[Subsec.~4.1]{Gabih et al (2022) PowerFixed}  and \cite[Sec.~4]{Sass et al (2021)} that the value
	function ${V^H_\para(t,y)}$ is bounded from above by some
	positive constant $\rewardconstant^{H} =\rewardconstant^{H}(\para,t,y)$ 
	for  any selection of the 	model parameters in $\para\in\modellparameter$. Hence it
	holds $\{\para\in \modellparameter: \theta\le 0\}\subset
	\modellparameter^{H} $. 
	
	More challenging is the case of power
	utility with positive parameter $\theta\in (0,1)$ which is also not
	bounded from above. That case is investigated in the remainder of
	this section. We note that this approach  can also be applied to
	log-utility leading to an alternative proof of well posedness for
	the maximization of expected log-utility, for details we refer to
	Kondakji \cite[Sec.~4.2]{Kondkaji (2019)}.
	
	\subsection{Power  Utility with ${\theta\in(0,1)}$}
	\label{well_posed_power} For the study of well posedness it will be
	convenient to extend the concept of the fully informed $F$-investor
	who has access to observations of the return and drift process to an
	(artifical) investor who observes also  the sequence of discrete-time expert opinions $(T_k,Z_k)$ and the continuous-time expert opinion process $J$,   as well as the  Wiener processes  $W^R,W^\drift, W^\myzeta$. That
	investor is called $\HG$-investor and defined by the investor
	filtration $\mathbb F^{\HG}=\mathbb{G}$ which is the underlying	filtration to which all stochastic processes of the financial market
	model are adapted. When comparing the $\HF$- and $\HG$-investor, the	additional information from the observation of expert opinions and
	the driving Wiener processes  $W^R,W^\drift, W^\myzeta$ will not
	lead to superior performance of the $\HG$-investor in  the
	considered utility maximization problem,  since the distribution of the wealth process $X^\pi$ is fully determined by the return process $R$ and the drift process $\mu$.  The latter is known to the $G$-investor and does not need to be estimated. 
	Thus, the associated state process $Y^G$ can be chosen as the pair $(X^\pi,\mu)$. However, for the $G$-investor we have  the
	inclusion $\mathbb F^{H}\subset \mathbb F^{\HG}$ for
	$H=\HR, Z_n,Z_\lambda ,\HD,\HF$. Note that for the $F$-investor we only have
	$\mathbb F^{\HR} \subset \mathbb F^{\HF}$, but in general  $\mathbb{F}^{Z_n}, \mathbb{F}^{Z_\lambda},\mathbb F^{\HD} \not\subset \mathbb F^{\HF}$. Analogous to
	the other investors  we define for  $H=\HG$  the set of admissible
	strategies $\mathcal{A}^{\HG}$,  the performance of a strategy
	$\rewardorigin^{\HG}_{t} $, the value $\mathcal{V}^{\HG}_{t} $, the reward function $D_\para^G(t,y,\pi)$ and the value function $V_\para^G(t,y)$
	as in  \eqref{set_admiss_0} and  \eqref{opti_t} through \eqref{value_state}, respectively.
	
	Next we want to derive estimates of the value 
	$\valueorigin_{t} ^{H}$ of the $H$-investor  in terms of the value
	$\valueorigin_{t} ^{\HG}$ of the $\HG$-investor. Let us fix a
	strategy $\pi\in\mathcal A^{H}{ \subset \mathcal A^{G}}$, then  tower property of the
	conditional expectation with  $ \mathcal F_{t}^{H,X}\subset\mathcal	F_{t} ^{\HG,X}$ implies  
	\begin{align}
		\rewardorigin_{t} ^{H}(\pi)&=\E
		\big[\utility_{\theta}(\welth_T^{\pi}) \big{|} \mathcal F_{t}^{H,X}\big]
		=\E \big[ \E  \big[\utility_{\theta}(\welth_T^{\pi}) \big| \mathcal
		F_{t}^{\HG,X} \big] \big{|} \mathcal F_{t} ^{H, X}\big] =\E
		\big[\rewardorigin_{t} ^{\HG}(\pi)\big{|}\mathcal F_{t} ^{H,X}\big].
	\end{align}
	Using Lemma \ref{lem:reward_condexp} yields for all information regimes $H$  
	\begin{align}
		\rewardorigin_{t} ^{H}(\pi)&= \rewardstate_\para^H(t,Y_t^H,\pi)= \E\big[\rewardstate_\para^G(t,Y_t^G,\pi) \big{|}\mathcal F_{t} ^{H,X}\big]
		= \E\big[\rewardstate_\para^G(t,Y_t^G,\pi) \big{|}Y_t^H\big].
	\end{align}
	Taking  supremum over all admissible strategies in $\mathcal	A^{H}$  it follows for all fixed  $Y^H_t=y\in\mathcal{Y}^H$ 
	\begin{align}
		V_\para^H(t,y)& = \sup_{\pi\in\mathcal A^{H}} \rewardstate_\para^H(t,y,\pi) = 
		\sup_{\pi\in\mathcal A^{H}} \E\big[\rewardstate_\para^G(t,Y_t^G,\pi) \big{|}Y_t^H=y\big].
	\end{align}
	Using  $\mathcal A^{H}\subset \mathcal A^{G}$ and properties of the supremum we find the estimates 
	\begin{align}
		V_\para^H(t,y)&  \le \E \Big[\sup_{\pi\in\mathcal A^{H}}	\rewardstate_\para^G(t,Y_t^G,\pi) \big{|}Y_t^H=y\Big] \\
		&\leq  \E \Big[\sup_{\pi\in\mathcal A^{G}}	\rewardstate_\para^G(t,Y_t^G,\pi) \big{|}Y_t^H=y\Big]
		= \E \big[V_\para^G(t,Y_t^G)\big{|}Y_t^H=y\big].
		\label{bounded_2}
	\end{align}

	In the sequel we will derive conditions under which
	$V_\para^G(t,y)$ with $y=(x,m)$  is bounded for any fixed $t\in[0,T]$ and $X_t^\pi=x,\mu_t=m$.  Then estimate
	\eqref{bounded_2} will allow us to derive conditions for the
	boundedness of $V_\para^H(t,y)$ for the other information
	regimes $H$. We will need the following lemma where we denote by
	$\drift_u^{t,\filter}$  the drift at time $u\in[t,T]$ starting at
	time $t\in[0,T]$ from $\filter\in\R^d$ . 	The  proof is given in Appendix
	\ref{proof_lemma_Helplemma_Psi_Expect}.
	\begin{lemma}
		\label{Helplemma_Psi_Expect} Let $\cpsi\in\mathbb R\backslash\{0\}$, $t\in[0,T)], z>0,$ and the stochastic process
		$(\Psi_s^{t,z,\filter})_{s\in[t,T]}$ be defined by
		\begin{align}
			\Psi_s^{t,z,\filter}:=z\exp\Big\{ \cpsi \int\nolimits_t^s
			(\drift_u^{t,\filter})^{\top}
			\Sigma_{\HR}^{-1}\drift_u^{t,\filter}\; du \Big\} \label{Psi_pro}
		\end{align}
		and the function  $d:[0,T]\times \R^d\to\R_{+}$ be  defined by
		$~d(t,\filter):=\E \big[\Psi_T^{t,1,\filter}\big],~$ for
		$t\in[0,T]$ and $\filter\in\R^d$. Then it holds
		\begin{align}
			d(t,\filter):=\exp\Big\{\filter^{\top} \Abound(t) \filter+\Bbound^{\top} (t)\filter +\Cbound (t) \Big\}.
			\label{Psi_4}
		\end{align}
		Here  $\Abound(t)$, $\Bbound(t)$ and $\Cbound(t)$ are functions in
		$t\in[0,T]$ taking values in $\mathbb R^{{\nAktien}\times
			{\nAktien}}$, $\mathbb R^{\nAktien}$ and $\mathbb
		R$, respectively, satisfying the following system of ODEs
		\begin{align}
			\label{A_bound}
			\frac{d \Abound (t)}{dt}&=-2 \Abound (t)\Sigma_{\drift}\Abound (t)+\revspeed^{\top} \Abound (t)+\Abound (t)\revspeed-\cpsi\Sigma_R^{-1}, & \Abound(T)&=0_{\nAktien\times\nAktien},\\[2ex]
			\label{B_bound}
			\frac{d\Bbound(t)}{dt}&=-2\Abound(t)\revspeed\overline{\drift}+\left[\revspeed^{\top}-2\Abound(t)\Sigma_{\drift}\right] \Bbound(t), &\Bbound(T)&=0_{\nAktien}, \\[2ex]
			\label{C_bound}
			\frac{d\Cbound(t)}{dt}&=-\frac{1}{2}\Bbound^{\top}(t)
			\Sigma_{\drift}
			\Bbound(t)-\Bbound^{\top}(t)\revspeed\overline{\drift}-tr\{
			\Sigma_{\drift} \Abound(t)\},&\Cbound(T) &=0.
		\end{align}
	\end{lemma}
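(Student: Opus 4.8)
The quantity $d(t,\filter)=\E\big[\Psi_T^{t,1,\filter}\big]$ is a Feynman--Kac functional of the Ornstein--Uhlenbeck drift process $\drift$ started at $\drift_t^{t,\filter}=\filter$. Writing $\generator$ for the generator of $\drift$,
\begin{align}
\generator g(\filter)=\big(\revspeed(\revlevel-\filter)\big)^{\top}\nabla g(\filter)+\frac{1}{2}\,\trace\big(\Sigma_{\drift}\,\nabla^2 g(\filter)\big),
\end{align}
the Feynman--Kac formula suggests that $d$ solves the linear parabolic PDE
\begin{align}
\partial_t d(t,\filter)+\generator d(t,\filter)+\cpsi\,\filter^{\top}\Sigma_R^{-1}\filter\,d(t,\filter)=0,\qquad d(T,\filter)=1 .
\end{align}

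First I would insert the exponential--quadratic ansatz $d(t,\filter)=\exp\{\filter^{\top}\Abound(t)\filter+\Bbound^{\top}(t)\filter+\Cbound(t)\}$, where $\Abound(t)$ may be taken symmetric without loss of generality. Using $\nabla d=d\,(2\Abound\filter+\Bbound)$ and $\nabla^2 d=d\,\big[(2\Abound\filter+\Bbound)(2\Abound\filter+\Bbound)^{\top}+2\Abound\big]$, dividing the PDE by $d$, and collecting the resulting identity according to the quadratic form $\filter^{\top}(\cdot)\filter$, the linear form $(\cdot)^{\top}\filter$ and the constant term, one obtains precisely the three equations \eqref{A_bound}--\eqref{C_bound}; the terminal condition $d(T,\cdot)=1$ forces $\Abound(T)=0_{\nAktien\times\nAktien}$, $\Bbound(T)=0_{\nAktien}$ and $\Cbound(T)=0$. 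Note that the matrix Riccati equation for $\Abound$ decouples; given its solution the equation for $\Bbound$ is linear, and $\Cbound$ follows by direct integration.

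To make this rigorous I would instead argue in the verification direction. Let $\Abound,\Bbound,\Cbound$ be the solutions of \eqref{A_bound}--\eqref{C_bound} on $[0,T]$ and set $\tilde d(s,\filter):=\exp\{\filter^{\top}\Abound(s)\filter+\Bbound^{\top}(s)\filter+\Cbound(s)\}$. Consider $M_s:=\Psi_s^{t,1,\filter}\,\tilde d(s,\drift_s^{t,\filter})$ for $s\in[t,T]$. Since $\Psi^{t,1,\filter}$ is of finite variation with $d\Psi_s=\cpsi\,(\drift_s^{t,\filter})^{\top}\Sigma_R^{-1}\drift_s^{t,\filter}\,\Psi_s\,ds$, It\^o's formula and the product rule give
\begin{align}
dM_s=\Psi_s\big[\partial_t\tilde d+\generator\tilde d+\cpsi\,(\drift_s^{t,\filter})^{\top}\Sigma_R^{-1}\drift_s^{t,\filter}\,\tilde d\big]\,ds+\Psi_s\,(\nabla\tilde d)^{\top}\voldrift\,d\Wmu_s ,
\end{align}
and the drift bracket vanishes identically because $\Abound,\Bbound,\Cbound$ solve the ODE system. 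Hence $M$ is a local martingale. Since $\Psi_t^{t,1,\filter}=1$ and $\tilde d(T,\cdot)=1$, once $M$ is shown to be a true martingale we conclude $\tilde d(t,\filter)=M_t=\E[M_T]=\E\big[\Psi_T^{t,1,\filter}\big]=d(t,\filter)$, which is the asserted formula \eqref{Psi_4}.

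The main obstacle is upgrading the local martingale to a genuine one, together with the implicit finiteness of $d(t,\filter)$. I would localize by stopping times $\tau_n\uparrow T$ reducing $M$, so that $\tilde d(t,\filter)=\E[M_{s\wedge\tau_n}]$, and then pass to the limit by dominated convergence. Here the Gaussian structure is decisive: $\drift^{t,\filter}$ is an affine functional of a Gaussian process with explicitly computable mean and covariance, so the exponential moments of the quadratic functional $\int_t^T(\drift_u^{t,\filter})^{\top}\Sigma_R^{-1}\drift_u^{t,\filter}\,du$ can be controlled uniformly. For $\cpsi<0$ this is immediate, since then $\Psi_s^{t,1,\filter}\le 1$ and $\tilde d$ is bounded on $[0,T]$; for $\cpsi>0$ the required integrability is exactly equivalent to the solution of the Riccati equation \eqref{A_bound} existing on all of $[0,T]$ without blow-up, which is the very nondegeneracy condition underlying the well-posedness analysis.
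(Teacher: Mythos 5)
Your proposal is correct, and its computational core --- the exponential--quadratic ansatz $d(t,\filter)=\exp\{\filter^{\top}\Abound(t)\filter+\Bbound^{\top}(t)\filter+\Cbound(t)\}$ leading to the Riccati system \eqref{A_bound}--\eqref{C_bound} --- is exactly the paper's. The difference is in how the Feynman--Kac connection is justified. The paper works with the augmented function $D(t,z,\filter)=\E[\Psi_T^{t,z,\filter}]$, writes the joint SDE for $(\drift,\Psi)$, asserts that its coefficients satisfy Lipschitz and linear growth conditions, invokes Feynman--Kac to obtain a PDE for $D$, and then separates variables via $D(t,z,\filter)=z\,d(t,\filter)$ before inserting the ansatz. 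You bypass the separation step (harmless, since $\Psi^{t,z,\filter}=z\,\Psi^{t,1,\filter}$) and, more importantly, you rigorize in the verification direction: build the candidate $\tilde d$ from the ODE solutions, show that $M_s=\Psi_s^{t,1,\filter}\,\tilde d(s,\drift_s^{t,\filter})$ is a local martingale, and upgrade it to a true martingale by localization together with exponential-moment bounds for Gaussian quadratic functionals. Your route is in fact the more careful one: the drift coefficient of the $\Psi$-component, $\cpsi\,\Psi_s(\drift_s)^{\top}\Sigma_{\HR}^{-1}\drift_s$, is quadratic in $\drift$ times $\Psi$, hence neither globally Lipschitz nor of linear growth, and the unbounded potential $\cpsi\,\filter^{\top}\Sigma_{\HR}^{-1}\filter$ places the problem outside the textbook Feynman--Kac setting the paper implicitly appeals to; the verification argument is the standard way to close this gap. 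You also correctly flag what the paper's statement leaves implicit: for $\cpsi>0$ the identity \eqref{Psi_4}, and indeed finiteness of $d(t,\filter)$, presupposes that the solution $\Abound$ of \eqref{A_bound} exists on all of $[0,T]$ without blow-up, which is precisely the non-explosion condition driving the well posedness analysis of Section~\ref{sec_wellposedness}. Two loose ends in your sketch are routine but should be carried out in a full write-up: the domination/uniform-integrability step for $\cpsi>0$ is asserted rather than proved (note that $M\ge 0$ is automatically a supermartingale, giving $\tilde d(t,\filter)\ge d(t,\filter)$ for free, so only the reverse inequality needs the Gaussian moment control), and for $\cpsi<0$ the claimed boundedness of $\tilde d$ requires the small observation that near $s=T$ one has $\Abound(s)\asymp\cpsi\Sigma_{\HR}^{-1}(T-s)$ negative definite and $\Bbound(s)=O((T-s)^2)$, so that $\sup_{\filter}\tilde d(s,\filter)=\exp\{\Cbound(s)-\tfrac14\Bbound^{\top}(s)\Abound^{-1}(s)\Bbound(s)\}$ stays bounded as $s\uparrow T$. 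Neither point affects the validity of your plan.
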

	Note that equation \eqref{A_bound} is a Riccati equation  for the 	symmetric matrix-valued function  $\Abound$,  while equation
	\eqref{B_bound} is  a system of $d$ linear differential equations 	whose solution $\Bbound$ is obtained given $\Abound$. Finally, given
	$\Abound$ and $\Bbound$ the  scalar function $\Cbound$ is 	obtained by integrating the right hand side of \eqref{C_bound}.

	\begin{theorem}
		\label{theo_bound_V} For a model parameter $\para$ with
		$\theta\in(0,1)$ the value function of the $G$-investor  satisfies for $y=(x,m)\in \mathcal{Y}^G=(0,\infty)\times \R^d$
		\begin{align}
			\label{VG_esti}
			V_\para ^G(t,y)\leq \frac{x^{\theta}}{\theta} \,
			d^{1-\theta}( t,m),
		\end{align}
		where the function $d:[0,T]\times\R^d\to \R_{+}$ is given by
		\eqref{Psi_4} for $\cpsi=\frac{\theta}{2(1-\theta)^2}$.
	\end{theorem}
	\begin{proof}
		The  proof is given in Appendix \ref{proof_lemma_lemma_bound_V}.
	\end{proof}
	The last theorem  together with the fact that for $\theta\le 0$ the	problem is well-posed (see the reasoning at the beginning of this
	section) allows to give the following characterization of the	$\HG$-investor's set $\modellparameter^{\HG}$ of feasible model
	parameters by the inclusion
	$\underline{\modellparameter}^{\HG}\subset \modellparameter^{\HG}$
	where  
	\begin{align}
		\begin{aligned}
			\underline{\modellparameter}^{\HG} & =\{\para\in\modellparameter:
			\theta\in(0,1) \text{ and }   d(t,m) \text{ given in
				\eqref{Psi_4}  is finite  for every fixed }  \\
			& ~~~~~~ (t,m)\in[0,T]\times \R^d\} \cup \{\para\in\modellparameter:
			\theta\le 0\}.
		\end{aligned}
		\label{modellparameter_3}
	\end{align}

	We are now in a position to characterize the set of feasible model
	parameters $\modellparameter^H$  for $H=\HR, Z_n,Z_\lambda ,\HD,\HF$  by combining  the estimate \eqref{VG_esti} for $V^{\HG}_\para(t,y)$  from
	Theorem \ref{theo_bound_V} with  \eqref{bounded_2} stating that    $V_\para^H(t,y) \leq  
	\E \big[V_\para^G(t,Y_t^G)\big{|}Y_t^H=y\big]$. Recall, that for the partially informed investors ($H=R, Z_n,Z_\lambda,J$) the state process  is $Y_t^H=(X^\pi,\Mpro^H,\Qpro^H)$. For the  $G$- and $F$-investor it is $Y_t^{G/F}=(X^\pi,\mu)$. Thus, substituting  the estimate \eqref{VG_esti} for $\valuefkt_\para^\HG$  into \eqref{bounded_2}   yields for	the partially informed investors for $y=(x,m,q)$
	\begin{align}		
		V_\para^{H}(t,y)\leq
		\frac{x^{\theta}}{\theta}\E \Big[d^{{1-\theta}}(t,\drift_t) \big| \Mpro^H_t=m,\Qpro_t^H=q\Big],\quad H=R,Z,J,
		\label{bound_value_H}
	\end{align}
	and for the fully informed investor for $y=(x,m)$
	\begin{align}		
		V_\para^{F}(t,y)\leq
		\frac{x^{\theta}}{\theta}\E \Big[d^{{1-\theta}}(t,\drift_t) \big| \drift_t=m\Big].
		\label{bound_value_F}
	\end{align}
	
	\medskip
	\paragraph{Well posedness for full information ($\mathbf{H=\HF}$)}
	For the $\HF$-investor the drift is known and from  
	inequality \eqref{bound_value_F} it follows  for $y=(x,m)\in \mathcal{Y}^F=(0,\infty)\times \R^d$
	\begin{align}
		V^{\HF}_\para(t,y)\leq  	\frac{x^{\theta}}{\theta} d^{1-\theta}( t,m), \nonumber
	\end{align}
	which implies that the inclusion given in \eqref{modellparameter_3}
	for $\modellparameter^{\HG}$ also holds for the  set of feasible
	model  parameters  $\modellparameter^{\HF}$  for the $\HF$-investor.

	The  restrictions in  \eqref{modellparameter_3} to the  feasible model parameters $\para$  for $\theta\in(0,1)$  are given implicitly via the boundedness of $d(t,m)$  where $d$ is given in \eqref{Psi_4}. They can be further analysed by
	studying conditions for non-explosive solutions of Riccati equation
	\eqref{A_bound} for the matrix-valued function $\Abound$  on the
	investment horizon $[0,T]$.  We refer to the specialist literature  such as  Kucera \cite{Kucera (1973)}, Roduner \cite{Roduner (1994)}, and Wonham \cite{Wonham (1968)}, and  the references  therein. The boundedness of
	the solution to  \eqref{A_bound} carries over to the boundedness of
	the solution to the linear  differential equation \eqref{B_bound} for
	$\Bbound$ and also to $\Cbound$ which  is obtained by integrating
	the right hand side of \eqref{C_bound}.   Thus we obtain
	\begin{cor}[Sufficient condition for well posedness, full information]
		\label{suff_cond_full}	
		\  \\
		The utility maximization problem \eqref{opti_org2} for the fully informed
		$F$-investor is well-posed for all parameters $\para\in	
		\underline{\modellparameter}^{\HF}\subset \modellparameter^{\HF}$
		where 
		\begin{align}
			\underline{\modellparameter}^{F}= & \Big\{\para\in\modellparameter:
			\theta\in(0,1),~ \Abound \text{ is bounded on } [0,T] ~\text{ for } \cpsi=\frac{\theta}{2(1-\theta)^2}  \Big\}\\
			&  \cup \{\para\in\modellparameter:
			\theta\le 0\}, \label{modellparameter_full}
		\end{align}
		and  $\Abound$ is the solution to  Riccati equation \eqref{A_bound}.	
	\end{cor}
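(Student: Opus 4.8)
The plan is to specialise the general value-function estimate \eqref{bound_value_H} to the full-information regime $H=\HF$, to insert the explicit form of $d$ supplied by Lemma~\ref{Helplemma_Psi_Expect}, and then to verify that the hypothesis ``$\Abound$ bounded on $[0,T]$'' already forces the remaining coefficients $\Bbound$ and $\Cbound$ to be bounded, so that the bounding expression is finite. I would first dispose of the case $\theta\le 0$: as recalled at the start of Subsection~\ref{well_posed_log}, for $\theta<0$ one has $\utility_\theta<0$ and $\rewardconstant^{\HF}=0$ works, while for $\theta=0$ the value function is bounded with no restriction on $\para$; this yields the inclusion $\{\para\in\modellparameter:\theta\le 0\}\subset\modellparameter^{\HF}$, i.e.\ the second set appearing in \eqref{modellparameter_full}.

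For the main case $\theta\in(0,1)$ I would start from the inequality for $\valueorigin_0^{\HF}$ derived immediately before the corollary, namely
\begin{align*}
\valueorigin_0^{\HF}(\para)\le \frac{x_0^{\theta}}{\theta}\,\E\big[d(0,\drift_0)^{1-\theta}\,\big|\,\mathcal F_0^{\HF}\big]=\frac{x_0^{\theta}}{\theta}\,d(0,\drift_0)^{1-\theta},
\end{align*}
where the conditional expectation collapses because $\mathcal F_0^{\HF}=\sigma\{\drift_0\}$ renders the deterministic function $d(0,\drift_0)$ of $\drift_0$ already $\mathcal F_0^{\HF}$-measurable. By Lemma~\ref{Helplemma_Psi_Expect}, taken with $\cpsi=\tfrac{\theta}{2(1-\theta)^2}$, we have $d(0,\drift_0)=\exp\{\drift_0^{\top}\Abound(0)\drift_0+\Bbound^{\top}(0)\drift_0+\Cbound(0)\}$, so the whole task reduces to showing that, under the standing hypothesis, the quantities $\Abound(0)$, $\Bbound(0)$, $\Cbound(0)$ are finite; then $d(0,\drift_0)<\infty$ for every realisation of $\drift_0$, the bounding expression is finite, and hence $\underline{\modellparameter}^{F}\subset\modellparameter^{\HF}$.

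The only genuine work, and what I expect to be the main obstacle, is propagating boundedness from $\Abound$ to $\Bbound$ and $\Cbound$. Assuming that the Riccati equation \eqref{A_bound} admits a solution $\Abound$ that does not explode on $[0,T]$, the coefficient $\revspeed^{\top}-2\Abound(t)\Sigma_{\drift}$ and the forcing term $-2\Abound(t)\revspeed\overline{\drift}$ of the linear system \eqref{B_bound} are continuous and bounded on the compact interval $[0,T]$; a variation-of-constants representation together with a Gronwall estimate then shows that $\Bbound$ is bounded. Given bounded $\Abound$ and $\Bbound$, the right-hand side of \eqref{C_bound} is bounded, so $\Cbound(0)$, obtained by integrating it over $[0,T]$, is finite. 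Consequently $d(0,\drift_0)<\infty$ and $\valueorigin_0^{\HF}(\para)\le\frac{x_0^{\theta}}{\theta}d(0,\drift_0)^{1-\theta}<\infty$. I would emphasise that the genuinely delicate question---\emph{when} the Riccati equation \eqref{A_bound} actually stays bounded on $[0,T]$, which is precisely the nirvana/explosion threshold isolated in this paper---is deliberately kept inside the hypothesis and is not resolved within the corollary itself.
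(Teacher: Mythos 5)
Your proposal is correct and follows essentially the same route as the paper: the case $\theta\le 0$ is dispatched exactly as in Subsection~\ref{well_posed_log}, the case $\theta\in(0,1)$ starts from \eqref{bound_value_H} with the conditional expectation collapsing since $d(0,\drift_0)$ is $\mathcal F_0^{\HF}$-measurable, and finiteness of the bound is reduced to boundedness of $\Abound$, which the paper likewise argues carries over to $\Bbound$ (linear ODE with bounded coefficients) and then to $\Cbound$ (integration). Your added detail on the variation-of-constants/Gronwall step merely makes explicit what the paper states in one sentence, so the two arguments coincide.
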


	We observe that  the sufficient condition derived in Corollary \ref{suff_cond_full}	does not depend on the initial values of the state process $(x_0,m_0)$ but only on $T$, and the constant parameters $\theta,\volR,\sigma_{\mu},\revspeed$. Note that, if the  solution $A_\cpsi$ to the terminal value problem for the Riccati equation \eqref{A_bound}  does not explode on  $[0,T]$, i.e.,  it is bounded, then it also does not explode on $[t,T]$ for any $t\in[0,T]$. Thus, for the well posedness of the $F$-investor's optimization problem \eqref{opti_org2} it is sufficient that $A_\cpsi(t)$ is bounded for $t=0$. Below we will see that for utility maximization problems under partial information we need stronger conditions.
	
	It is well known, that in general
	closed-form solutions of Riccati differential equations are
	available only for the one-dimensional case ($d=1$). More details
	about this special case can be found below in
	Subsec.~\ref{WellPosedSpecialCase}.

	\medskip
	\paragraph{Well posedness for partial information ($\mathbf{H=\HR, Z_n,Z_\lambda ,\HD}$)}
	For the partially informed investors  the conditional expectation of  the random variable $d(t,\drift_t)$ given the filter estimates of the hidden drift $\mu_t$  in \eqref{bound_value_H}  has to be analyzed, taking advantage of the fact that the  conditional distribution of $\mu_t$  is Gaussian and completely characterized by the
	conditional mean $\Mpro_t^{H}$ and the conditional covariance	$\Qpro_t^{H}$.   
	The result is given below	in  Theorem \ref{theorem_partial_Inv} for which we need the
	following lemma  which is proven   in Appendix	\ref{proof_moment_quadratic_form}.

	\begin{lemma}
		\label{moment_quadratic_form}
		Let $U,\Sigma$ be symmetric $d\times d$ matrices, $\Sigma$ positive semidefinite with an associated decomposition $\Sigma=\Sigmafac \Sigmafac^\top$ with a $d\times d$-matrix $P$.  Further, let the eigenvalues of $\Sigma{U}$ be denoted by  $\lambda_1,\ldots,\lambda_d$.
		
		\begin{enumerate}
			\item  The eigenvalues $\lambda_1,\ldots,\lambda_d$ of $\Sigma{U}$  are also the eigenvalues of  $\Sigmafac^{\top}U\Sigmafac$. They  are all real.
			\item
			Let   $K=I_{\nAktien}-2\Sigma{U}$,  $a\in \R^d$,   $Y\sim\mathcal N(0_d,\Sigma)$ be a  $d$-dimensional zero-mean Gaussian random vector with covariance matrix $\Sigma$,   and $\quadform$ the real-valued random variable defined by the quadratic form $\quadform=Y^{\top} U Y+a^{\top} Y$.
			
			If   $\lambda_i<\frac{1}{2}$~ for all $i=1,\ldots,d$, then it holds for the exponential moment of $\quadform$
			\begin{align}
				\label{quadratic form}
				\E[\,e ^\quadform\,]=\E \big[ \exp{(Y^{\top} U Y+a^{\top} Y)} \big]=(\det(K))^{-1/2}\exp\Big\{\frac{1}{2}a^{\top}K^{-1}{\Sigma}a\Big\}.   
			\end{align}
			\item For the terms of the right hand side of \eqref{quadratic form} it holds
			\begin{align}
				\label{quadratic form_eigenvalue}
				\det(K)= \prod\limits_{j=1}^d(1-2\lambda_j) \quad \text{and}\quad a^{\top}K^{-1}{\Sigma} a= \sum\limits_{j=1}^d c_j^2(1-2\lambda_j)^{-1},
			\end{align}
			where $c_1,\ldots,c_d$ are the entries of the vector  $c=D^{\top}\Sigmafac^{\top}a$, and  $D$ is the orthogonal $d\times d$-matrix  diagonalizing the symmetric  matrix $\Sigmafac^{\top}U\Sigmafac$ such that it holds $\Sigmafac^{\top}U\Sigmafac=D\Lambda D^\top$ with $\Lambda=\diag(\lambda_1,\ldots,\lambda_d)$. 
		\end{enumerate}		
	\end{lemma}
	
	\begin{remark}
		The expressions in 	\eqref{quadratic form_eigenvalue} are helpful for the actual computation of the expectation in \eqref{quadratic form}. For large dimensions $d$ and a covariance matrix $\Sigma$ of  low-rank $r\ll d$  the computational efficiency can be improved by  working with a low-rank decomposition $\Sigma=\Sigmafac_r \Sigmafac_r^\top$ with a $d\times r$-matrix $\Sigmafac_r$ and an eigenvalue decomposition of the $r\times r$-matrix $\Sigmafac_r^{\top}U\Sigmafac_r=D_r\Lambda_r D_r^\top$ with an orthogonal $r\times r$-matrix $D_r$ and the diagonal matrix $\Lambda_r$ obtained form $\Lambda$ by removing  $d-r$ zero eigenvalues on the diagonal.
		
	\end{remark}

	\begin{theorem}
		\label{theorem_partial_Inv}
		Let for the information regimes $H=\HR, Z_n,Z_\lambda ,\HD$ with partial information and $t\in[0,T]$ denote  the conditional mean and covariance describing the Gaussian conditional distribution  of the  the drift $\mu_{t}$ given $\mathcal{F}_{t}^H$ by $\Mpro^H_t=m\in\R^d$ and $\Qpro^H_t=q\in\mathbb R^{d\times d}$. 
		Further, let the solution  $\Abound$  to the Riccati equation \eqref{A_bound} for $\cpsi=\frac{\theta}{2(1-\theta)^2}$ be bounded on $[0,T]$,
		and assume that all eigenvalues of ${K= K(t)}=I_{\nAktien}-2{ (1-\theta)}{q\Abound(t)}$  are positive. \\Then it holds  for all $x>0$ and $y=(x,m,q)$
		\begin{align}
			V^{H}_\para(t,y)&\leq  \frac{x^{\theta}}{\theta} d^{{1-\theta}}(t,m) (\det({K}))^{-1/2} {\exp\Big\{\frac{1}{2}}{a}^{\top} {K}^{-1}qa\Big\},
			\label{bound_7}
		\end{align}
		where
		$a={a(t)= (1-\theta)}2\Abound({ t}){m} +\Bbound(t)$   and $d(t,m)$ is given in	\eqref{Psi_4}.
	\end{theorem}
	The proof can be found  in Appendix	\ref{proof_theorem_partial_Inv}.
	We observe that the upper bound for $V^{H}_\para(t,y)$ given in Theorem \ref{theorem_partial_Inv} is finite if $d(t,m)$ is finite and  all eigenvalues of   ${K}=I_{\nAktien}-2{ (1-\theta)}q\Abound(t)$ are positive.  As in the full information case, the first condition is satisfied if the  Riccati equation \eqref{A_bound}  for  $\Abound$  are bounded on  $[0,T]$. Recall, this depends only on the choice of the constant model parameters $\theta,\volR,\sigma_{\mu},\revspeed$ but not on $\Mpro^H_{t}=m$, and  it implies boundedness of $\Abound$  on $[t,T]$ for all $t\in[0,T]$. 
	However, the second condition is for $\theta\in(0,1)$ an additional restriction for the partially informed case and states that the conditional covariance $\Qpro^H_{t}=q$ is not ``too large'' such that  all eigenvalues of ${K}$ are positive. We have to require that the conditional covariance $\Qpro^H$ process starting at $t=0$ with $\Qpro^H_0=\condcovinitial$ is such that 
	the eigenvalues of ${K}= K(t)=I_{\nAktien}-2{ (1-\theta)}\Qpro^H_t\Abound(t)$ remain positive on the entire time interval $[0,T]$. 		
	Note that for the regimes $H=R,J$ and for $H=Z_n$ with discrete-time expert opinions at fixed arrival times, $\Qpro^Z$ a deterministic function  and  fully specified by its initial value $\condcovinitial$. However, for the regime  $H=Z_\lambda$ with discrete-time expert opinions at random arrival times, $\Qpro^Z$ is a stochastic process, it depends on the random timing of expert's views and is not only specified by its initial value $\condcovinitial$.
	
	For $\theta\in(0,1)$ it is known that if the solution $\Abound$ exists on [0,T], it is  symmetric and positive semidefinite,
	see  Roduner \cite[Theorem 1.2]{Roduner (1994)}. Further, the conditional covariance $\Qpro^H$ is also symmetric and positive semidefinite.  However, the product $\Qpro^H\Abound$ of the two symmetric matrices is generally no longer symmetric, and the properties of such matrices may no longer apply. But it is known that $\Qpro^H\Abound$ has the same eigenvalues as $D=S\Qpro^HS$ with $S$ denoting the unique symmetric and positive semidefinite square root of $\Abound $, that is $\Abound=SS $. Since $D$ is symmetric and  positive semidefinite its eigenvalues and therefore the eigenvalues of $\Qpro^H\Abound$ are nonnegative. Finally, let $\lambda=\lambda( \Qpro^H\Abound)\ge 0$ be an arbitrary eigenvalue of $ \Qpro^H\Abound$. Then  $1-2{ (1-\theta)}\lambda$  is an eigenvalue of  $K=I_{d}-2{ (1-\theta)}\Qpro^H\Abound$.
	Thus, the  condition that the eigenvalues of $K$ are positive implies that  all  eigenvalues  of   $\Qpro^H_t\Abound(t)$  are required to be  strictly smaller than $\frac{1}{2{ (1-\theta)}}$ for all $t\in[0,T]$. 
	Let  $\lambda_{\max}(G)$ denote the largest eigenvalue of a generic matrix $G$ with real and nonnegative eigenvalues, then this condition can be stated as 
	\begin{align}
		\label{suff_cond_partial_1}
		\lambda_{\max}(\Qpro^H_t\Abound(t))<\frac{1}{2{ (1-\theta)}}, \quad \text{ for all } t\in[0,T].
	\end{align}
	Summarizing, from the above theorem we deduce the following sufficient condition for well posedness.
	\begin{cor}[Sufficient condition for well posedness, partial information]
		\label{suff_cond_partial}	\ \\
		The utility maximization problem \eqref{opti_org2} for partially informed 
		$H$-investors, $H=\HR, Z_n,Z_\lambda ,\HD$, is well-posed for all parameters $\para\in	
		\underline{\modellparameter}^{H}\subset \modellparameter^{H}$
		where 
		\begin{align}
			\nonumber
			\underline{\modellparameter}^{H}=  \Big\{&\para\in\modellparameter:~
			\theta\in(0,1),~ \Abound~ \text{is  bounded on } [0,T] ~\text{ for } \cpsi=\frac{\theta}{2(1-\theta)^2},
			\\&  			
			\text{and }~  \lambda_{\max}(\Qpro^H_t\Abound(t))<\frac{1}{2{ (1-\theta)}}, ~ \text{ for all } t\in[0,T]				 
			\Big\}   
			\cup \big\{\para\in\modellparameter:	\theta\le 0\big\}
			\label{modellparameter_partial}.
		\end{align}		
	\end{cor}

	\subsection{Market Models With a Single Risky Asset}
	\label{WellPosedSpecialCase} The above conditions for the well
	posedness  given in terms of the boundedness of  $\Abound(t)$ on $[0,T]$, and condition \eqref{suff_cond_partial_1} to the eigenvalues of $\Qpro^H_t\Abound(t)$ are quite abstract and its verification requires that the solution of the Riccati ODE \eqref{A_bound} is bounded on $[0,T]$. While in the multi-dimensional case Riccati ODEs in general can be solved only numerically these equations  enjoy a closed-form solution in the one-dimensional case. 
	This allows to give more explicit characterizations of the set of feasible parameters for market models with a single risky asset only.
	The following lemma gives explicit conditions to the model
	parameters under which  \eqref{A_bound} has a  bounded solution on
	$[0,T]$. For the proof we refer to Kondakji \cite[Lemma A.1.3, A.2.2
	and  A.2.3]{Kondkaji (2019)}

	\begin{lemma}
		\label{lemma_Beding_beschr}
		Let $\nAktien=1$,  $\theta\in(0,1), \cpsi=\frac{\theta}{2(1-\theta)^2}$,  and
		\begin{align}
			\label{Diskriminante_Barier}
			\Delta_{\cpsi}=4\revspeed^2\Big(1-2 \cpsi\Big(\frac{\sigma_{\drift}}{\revspeed\sigma_{\HR}}\Big)^2 \Big)
			\quad \text{and} \quad
			\delta_{\cpsi}:=\frac{1}{2}\sqrt{|\Delta_{\cpsi}|}.
		\end{align}
		Then it holds for the Riccati differential equation \eqref{A_bound} on $[0,T]$
		\begin{enumerate}
			\item For $\Delta_{\cpsi}\geq0$  there is a bounded solution for all $T>0$.
			\item For $\Delta_{\cpsi}<0$ a bounded solution exists only if $T<T_{\cpsi}^{E}$ with the explosion time
			\begin{align}
				T^E_\gamma :=\frac{1}{\delta_{\cpsi}}
				\Big(\frac{\pi}{2}+\arctan\frac{\revspeed}{\delta_{\cpsi}}\Big).
				\label{t_Explosion_AV}
			\end{align}
		\end{enumerate}
	\end{lemma}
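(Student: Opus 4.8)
The plan is to specialize the matrix Riccati equation \eqref{A_bound} to the scalar case $\nAktien=1$, rewrite it as an autonomous scalar Riccati equation in time-to-maturity, and then read off both statements from the phase line of the associated quadratic, computing the explosion time in part (2) by separation of variables.

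First I would reduce \eqref{A_bound}. For $\nAktien=1$ we have $\Sigma_{\drift}=\voldrift^2$, $\Sigma_R^{-1}=\volR^{-2}$ and $\revspeed^{\top}\Abound+\Abound\revspeed=2\revspeed\Abound$, so \eqref{A_bound} becomes the scalar terminal-value problem
\[
\Abound'(t)=-2\voldrift^2\,\Abound(t)^2+2\revspeed\,\Abound(t)-\cpsi\volR^{-2},\qquad \Abound(T)=0.
\]
Passing to time-to-maturity $\tau=T-t$ and setting $a(\tau):=\Abound(T-\tau)$ turns this into the autonomous initial-value problem $a'(\tau)=f(a(\tau))$, $a(0)=0$, with
\[
f(a)=2\voldrift^2 a^2-2\revspeed a+\cpsi\volR^{-2}.
\]
This is an upward-opening parabola (leading coefficient $2\voldrift^2>0$) with positive constant term (recall $\cpsi=\theta/(2(1-\theta)^2)>0$ for $\theta\in(0,1)$); its discriminant equals exactly $\Delta_{\cpsi}$ of \eqref{Diskriminante_Barier} and its minimum value equals $-\Delta_{\cpsi}/(8\voldrift^2)$. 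Since $f$ is locally Lipschitz, a unique local solution exists, and boundedness on $[0,T]$ is equivalent to the solution not reaching $+\infty$ before $\tau=T$.

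For part (1), $\Delta_{\cpsi}\ge 0$, I would argue via equilibria. The two real roots of $f$ have product $\cpsi/(2\voldrift^2\volR^2)>0$ and sum $\revspeed/\voldrift^2>0$, where I use that $-\revspeed$ is stable, hence $\revspeed>0$ in the scalar case; thus both roots are positive, and $\Delta_{\cpsi}<4\revspeed^2$ (strict, since $\cpsi>0$) forces the smaller root $a_-$ to be strictly positive. Therefore the start $a(0)=0$ lies strictly below $a_-$, and on $(-\infty,a_-)$ one has $f>0$. So $a$ is strictly increasing and, by uniqueness, cannot cross the equilibrium $a_-$; it is trapped in $[0,a_-)$ and converges to $a_-$ as $\tau\to\infty$. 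Consequently $a$ stays bounded for every finite $T$, giving a bounded solution for all $T>0$.

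For part (2), $\Delta_{\cpsi}<0$, the parabola has no real root and $f(a)\ge -\Delta_{\cpsi}/(8\voldrift^2)>0$, so $a$ is strictly increasing and the quadratic growth forces blow-up in finite time. I would separate variables, $\tau=\int_0^{a}dr/f(r)$, complete the square as
\[
f(r)=2\voldrift^2\Big[\big(r-\tfrac{\revspeed}{2\voldrift^2}\big)^2+\big(\tfrac{\delta_{\cpsi}}{2\voldrift^2}\big)^2\Big],
\]
with $\delta_{\cpsi}$ as in \eqref{Diskriminante_Barier}, and use $\int dr/(u^2+\beta^2)=\beta^{-1}\arctan(u/\beta)$. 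Letting $a\to+\infty$ (so the upper $\arctan$ tends to $\pi/2$) and evaluating the lower limit $r=0$ at $u_0/\beta=-\revspeed/\delta_{\cpsi}$ yields
\[
T^E=\frac{1}{\delta_{\cpsi}}\Big(\frac{\pi}{2}-\arctan\frac{-\revspeed}{\delta_{\cpsi}}\Big)=\frac{1}{\delta_{\cpsi}}\Big(\frac{\pi}{2}+\arctan\frac{\revspeed}{\delta_{\cpsi}}\Big),
\]
matching \eqref{t_Explosion_AV}; the solution is bounded precisely for $T<T^E$. The main obstacle is purely one of bookkeeping in this last computation: tracking the factor $2\voldrift^2$ through the completed square so that $2\voldrift^2\beta=\delta_{\cpsi}$, and getting the sign of the lower-limit contribution right so that $-\arctan(-\revspeed/\delta_{\cpsi})$ collapses to the $+\arctan(\revspeed/\delta_{\cpsi})$ in \eqref{t_Explosion_AV}.
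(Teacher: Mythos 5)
Your proof is correct. Note that the paper itself does not prove this lemma inline: it defers to Kondakji's thesis (Lemmas A.1.3, A.2.2, A.2.3), and the surrounding text indicates that the argument there rests on the explicit closed-form solution of the one-dimensional Riccati equation, whose form (rational/exponential when $\Delta_{\cpsi}\ge 0$, tangent-type when $\Delta_{\cpsi}<0$) is then inspected for explosion. Your route replaces this with a qualitative analysis of the autonomous equation $a'=f(a)$, $a(0)=0$, in time-to-maturity: for $\Delta_{\cpsi}\ge 0$ you never write the solution down at all, but trap it below the smaller (strictly positive) equilibrium $a_-$ by monotonicity and uniqueness — this correctly covers the double-root case $\Delta_{\cpsi}=0$ as well, and uses the standing assumption $\revspeed>0$ exactly where needed to place both roots on the positive half-line; for $\Delta_{\cpsi}<0$ your separation-of-variables computation $T^E=\int_0^\infty dr/f(r)$, with the completed square giving $2\voldrift^2\beta=\delta_{\cpsi}$ and the lower limit contributing $+\arctan(\revspeed/\delta_{\cpsi})$, reproduces \eqref{t_Explosion_AV} exactly and in fact yields slightly more than claimed (boundedness holds if and only if $T<T^E$, not merely "only if"). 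What the explicit-solution approach buys instead is the formula for $\Abound$ itself, which is useful elsewhere in the paper (e.g.\ for the numerical delineation of the feasible parameter sets and for checking conditions such as $\Qpro^H<(2\Abound)^{-1}$); your argument is the more economical one if the sole goal is the boundedness/explosion dichotomy of the lemma.
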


	The above lemma allows to give more explicit sufficient conditions for well posedness given for the general multi-dimensional case  in Corollary \ref{suff_cond_full}	and \ref{suff_cond_partial}. They can be formulated in terms of the parameters $\revspeed,\sigma_\drift,\sigma_{\HR}$ describing the variance of the asset price and drift process, the investment horizon $T$, the parameter $\theta$ of the utility function and the  conditional covariance process $\Qpro^H$. Analyzing the inequality $T<T^E$ and \eqref{t_Explosion_AV} we obtain
	\begin{cor}[Sufficient condition for well posedness, single risky asset]
		\label{suff_cond_d1}
		\  \\
		Let $d=1$, $\cpsi=\frac{\theta}{2(1-\theta)^2}$ and $\Delta_{\cpsi}, T_{\cpsi}^{E}$ as given in \eqref{Diskriminante_Barier} and \eqref{t_Explosion_AV}, respectively.
		\begin{enumerate}
			\item 
			The utility maximization problem \eqref{opti_org2} for the \textbf{fully informed}
			$F$-investor is well-posed for all parameters $\para\in	
			\underline{\modellparameter}^{\HF}\subset \modellparameter^{\HF}$
			where 
			\begin{align}
				\label{modellparameter_full_d1}		
				\underline{\modellparameter}^{\HF}= & 			 \big\{\para\in\modellparameter:
				\theta\in(0,1), \revspeed,	\sigma_{\drift}, \sigma_{\HR}~ 
				\text{such that  either }~\Delta_{\cpsi}\geq0, ~ 
				\text{ or } ~\Delta_{\cpsi}<0 \text{ and }  T<T_{\cpsi}^{E}\big\}\\[-2ex]
				& \cup 	 \big\{\para\in\modellparameter:	\theta\le 0\big\} 	.
			\end{align}
			\item
			The utility maximization problem \eqref{opti_org2} for the \textbf{partially  informed}
			$H$-investor with  $H=\HR, Z_n,Z_\lambda ,\HD$ is well-posed for all parameters $\para\in	
			\underline{\modellparameter}^{H}\subset \modellparameter^{H}$
			where 
			\begin{align}
				\label{modellparameter_partial_d1}						
				\begin{aligned}						
					\underline{\modellparameter}^{H}=  \big\{\para\in\modellparameter: &
					\theta\in(0,1), \revspeed,	\sigma_{\drift}, \sigma_{\HR} \text{ such that  either }\Delta_{\cpsi}\geq0, ~ 
					\text{ or } ~\Delta_{\cpsi}<0 \text{ and }  T<T_{\cpsi}^{E},\\					
					&	   \Qpro^H(t) <\frac{1}{2{ (1-\theta)}\Abound(t)}\text{ on } [0,T],  \big\} \cup
					\big\{\para\in\modellparameter:	\theta\le 0\big\} . 
				\end{aligned}					
			\end{align}
		\end{enumerate}
	\end{cor}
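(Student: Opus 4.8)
The plan is to obtain both parts as direct specializations to $d=1$ of the general multi-dimensional sufficient conditions already established in Corollary~\ref{suff_cond_full} and Corollary~\ref{suff_cond_partial}, using the closed-form analysis of the scalar Riccati equation \eqref{A_bound} provided by Lemma~\ref{lemma_Beding_beschr}. No new probabilistic estimate is needed; the task is purely to translate the two abstract requirements --- boundedness of $\Abound$ on $[0,T]$ and the eigenvalue condition \eqref{suff_cond_partial_1} --- into explicit statements about $\revspeed,\sigma_{\drift},\sigma_{\HR},T$ and the conditional covariance $\Qpro^H$.

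For part~(1) I would start from the characterization \eqref{modellparameter_full} of $\underline{\modellparameter}^{\HF}$, which for $\theta\in(0,1)$ reduces well posedness of the $F$-investor's problem to boundedness of $\Abound$ on $[0,T]$ (and is unconditional for $\theta\le0$). In the case $d=1$ the matrix Riccati ODE \eqref{A_bound} becomes a scalar Riccati equation, and Lemma~\ref{lemma_Beding_beschr} states exactly when its solution stays bounded: for all $T>0$ when $\Delta_{\cpsi}\ge0$, and only for $T<T_{\cpsi}^{E}$ when $\Delta_{\cpsi}<0$. Substituting these two alternatives into \eqref{modellparameter_full} and adjoining the set $\{\theta\le0\}$ yields the decomposition \eqref{modellparameter_full_d1} verbatim.

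For part~(2) I would proceed analogously from \eqref{modellparameter_partial} in Corollary~\ref{suff_cond_partial}, which keeps the boundedness requirement for $\Abound$ --- handled exactly as above by Lemma~\ref{lemma_Beding_beschr} --- and adds the eigenvalue condition \eqref{suff_cond_partial_1}, namely $\lambda_{\max}(\Abound(t)\Qpro^H_t)<\tfrac12$ for all $t\in[0,T]$ almost surely. In one dimension all quantities are scalar, so the unique eigenvalue of the product is $\Abound(t)\Qpro^H_t$ and \eqref{suff_cond_partial_1} becomes $\Abound(t)\Qpro^H_t<\tfrac12$. Wherever $\Abound(t)>0$ this rewrites as $\Qpro^H_t<(2\Abound(t))^{-1}$, and wherever $\Abound(t)=0$ the inequality holds automatically; with the usual convention $(2\cdot 0)^{-1}=+\infty$ both regimes are captured by the single requirement $\Qpro^H<(2\Abound)^{-1}$ on $[0,T]$ a.s., which is precisely the extra clause appearing in \eqref{modellparameter_partial_d1}.

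I do not expect any genuine obstacle, since the statement is a specialization rather than a fresh result. The only point deserving explicit care is the sign of $\Abound$: for the equivalence $\Abound(t)\Qpro^H_t<\tfrac12 \Leftrightarrow \Qpro^H_t<(2\Abound(t))^{-1}$ to be order-preserving one needs $\Abound(t)\ge0$, which holds for $\theta\in(0,1)$ by the positive semidefiniteness of the Riccati solution recalled from Roduner in the discussion preceding Corollary~\ref{suff_cond_partial}. For $\theta\le0$ well posedness is unconditional and $\Abound\le0$, so no covariance restriction is imposed, consistent with the last summand $\{\theta\le0\}$ in both \eqref{modellparameter_full_d1} and \eqref{modellparameter_partial_d1}.
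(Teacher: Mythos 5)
Your proposal is correct and follows essentially the same route as the paper: the corollary is obtained by specializing the general sufficient conditions of Corollaries \ref{suff_cond_full} and \ref{suff_cond_partial} to $d=1$, using Lemma \ref{lemma_Beding_beschr} to replace the abstract boundedness requirement on $\Abound$ by the explicit alternatives $\Delta_{\cpsi}\geq 0$ or $\Delta_{\cpsi}<0$ with $T<T_{\cpsi}^{E}$, and rewriting the scalar form of \eqref{suff_cond_partial_1} as $\Qpro^H<(2\Abound)^{-1}$. Your extra care about the sign of $\Abound$ (nonnegativity for $\theta\in(0,1)$ via Roduner) and the convention at points where $\Abound=0$ is consistent with the discussion preceding Corollary \ref{suff_cond_partial} in the paper.
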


	\section{Numerical Results}
	\label{numeric_result} In this section we illustrate the theoretical
	findings of the previous sections by results of some numerical
	experiments.
	They are based on a stock market model where the unobservable drift
	$(\drift_t)_{t\in[0,T]}$ follows an Ornstein-Uhlenbeck process as
	given in \eqref{drift} whereas the volatility is known and constant.
	For simplicity, we assume that there is only one risky asset in the
	market, i.e. $\nAktien=1$.
	If not stated otherwise, our numerical experiments are based on model
	parameters as given in Table \ref{parameter}.
	\begin{table}[ht]
		\begin{tabular}{|lll|r||ll|r|}
			\hline
			\rule{0mm}{2.5ex}%
			Drift & mean-reversion level& $\revlevel$ & $0$      &    Investment horizon & $T$ & $1~$ year
			\\ \hline \rule{0mm}{2.5ex}%
			& mean-reversion speed& $\revspeed$ & $3$ &    Power utility parameter  & $\theta$ & $0.3$
			\\ \hline
			& volatility  &$\sigma_{\mu}$  &$1$ &      Volatility of stock & $\volR$ & $0.25$
			\\\hline \rule{0mm}{2.5ex}%
			& mean of $\drift_0$  &   $\driftinitial$  & $\revlevel=0$&    Initial estimate  & $\filterinitial=\driftinitial$ & $0$
			\\ \hline
			&variance of $\drift_0$  &   $\covinitial$  & $\frac{\sigma_{\drift}^2}{2\revspeed}=0.1\overline{6}$ &  &   $\condcovinitial=\covinitial$ & $0.1\overline{6}$
			\\ \hline
		\end{tabular}
		\\[1ex]
		\centering \caption{\label{parameter}
			\small Model parameters for the numerical experiments
		}
	\end{table}
	
	\begin{figure}[t!h]
		\hspace*{-0mm}
		\includegraphics[width=1\textwidth]{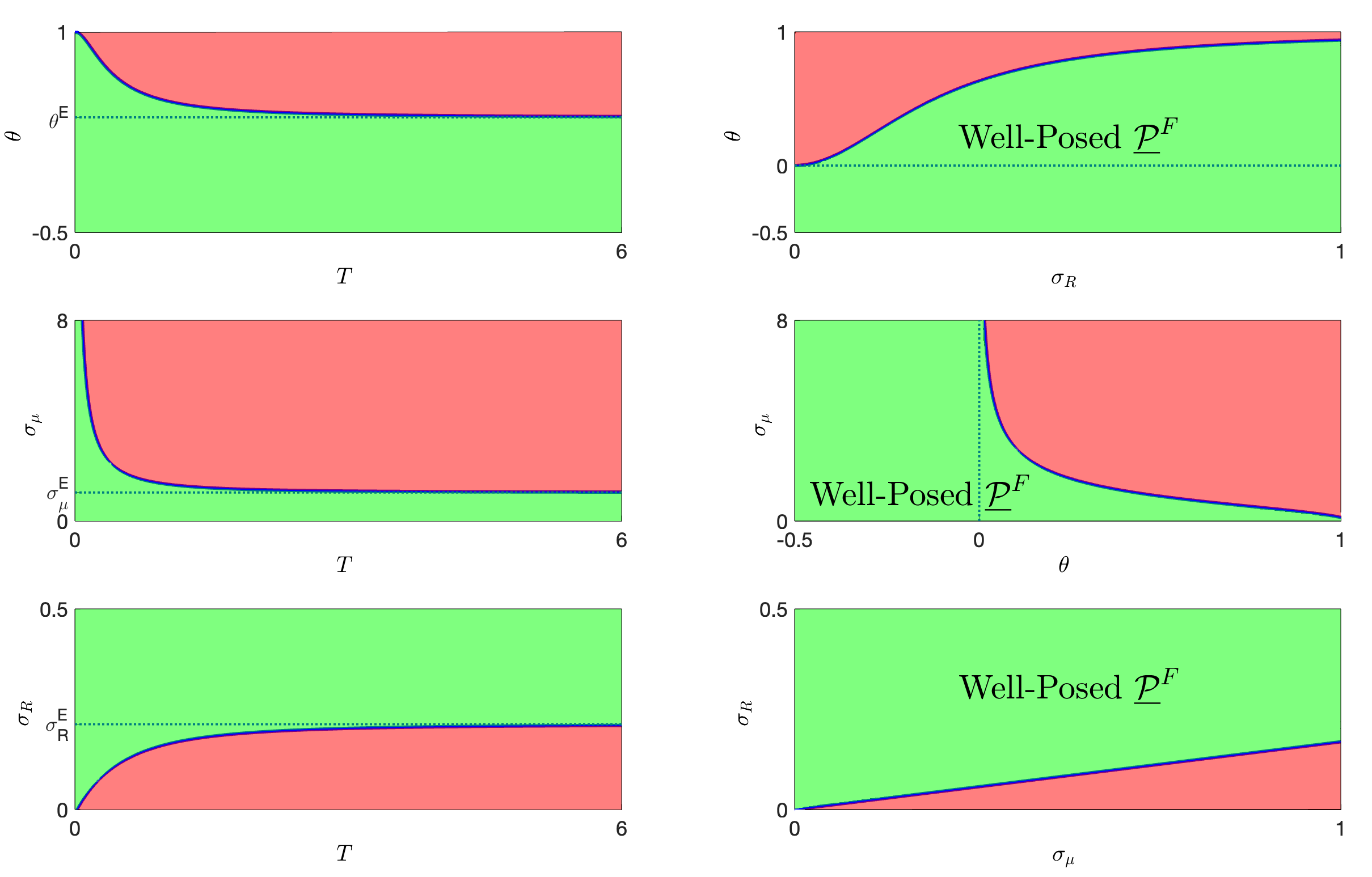}
		\centering 
		\caption{\label{nirvana_abbild}
			\parbox[t]{0.7\textwidth}{
				\small
				Subset $\underline{\modellparameter}^{\HF}$ of the set of feasible parameters $\modellparameter^{\HF}$   
				\newline
				depending on~~           $\theta~~$ and $T$  (top left),    \hspace*{2em}$\theta$ and $\volR$  (top right),  \newline
				\phantom{depending on~~}           $\sigma_{\mu}$ and $T$  (middle left),     $\sigma_{\mu}$ and $\theta$ ~ (middle right), \newline
				\phantom{depending on~~}           $\volR$ and $T$  (bottom left),    $\volR$ and $\sigma_{\mu}$  (bottom right). \newline
				The other parameters are given in Table \ref{parameter}.        
		}}
	\end{figure}
	
	In Section \ref{sec_wellposedness} we have specified sufficient conditions to the model parameters for which the optimization problem is
	well-posed. For market models with a single risky asset  these conditions are given Corollary \ref{suff_cond_d1}. In Figure \ref{nirvana_abbild} we visualize the subset $\underline{\modellparameter}^{\HF}$ of the set of feasible parameters $\underline{\modellparameter}^{\HF}$ for which well posedness for the utility maximization problem of the fully informed investor can be guaranteed.  In particular, we show the dependence of $\underline{\modellparameter}^{\HF}$  on the
	investment horizon $T$, the power utility parameter $\theta$, the
	volatility $\sigma_{\drift}$ of the drift
	and the volatility $\sigma_{\HR}$ of the stock price.
	
	The two top panels show the  subset $\underline{\modellparameter}^{\HF}$ depending
	on $\theta, T$ and $\sigma_R$. It can be seen that for negative
	$\theta$, i.e.~for investors which are more risk averse than the
	log-utility investor,  the optimization problem is always
	well-posed. Moreover, the top left panel shows that for the selected
	parameters the optimization problem is well-posed for all $T>0$ if
	the parameter $\theta$ does not exceed the critical value
	$\theta^{E}\approx 0.36$. For  $\theta>
	\theta^{E}$, i.e.~for investors with sufficiently small risk-aversion, the optimization problem is no longer
	well-posed for all investment horizons $T$, but only up to a
	critical investment horizon $T^E=T^E(\theta)$ depending on $\theta$
	and given in \eqref{t_Explosion_AV}. The larger $\theta$, the
	smaller is that critical  investment horizon $T(\theta)$. For the
	limiting case $\theta\rightarrow 1$ it holds $T^E(\theta)\rightarrow
	0$. The top right panel shows for an investment horizon fixed to
	$T=1$ the  subset $\underline{\modellparameter}^{\HF}$  depending on  $\theta$
	and the volatility $\volR$ of stock price. It can be seen that  larger
	values for the stock volatility allow to choose larger values of
	$\theta$.

	The two panels in the middle illustrate the influence of the drift
	volatility $\sigma_{\mu}$ on  the  subset $\underline{\modellparameter}^{\HF}$ . The
	left panel shows  that the optimization problem is well-posed for
	all $T>0$ as long as the volatility $\sigma_{\drift}$ of the drift
	does not exceed the critical value $\sigma_{\drift}^{E}\approx
	1.15$. For volatilities $\sigma_{\drift}>\sigma_{\drift}^{E}$ the
	optimization problem is well-posed only for investment horizons $T$
	smaller than the critical horizon $T^E=T^E(\sigma_{\drift})$ that
	depends on $\sigma_{\drift}$ and is given in \eqref{t_Explosion_AV}.
	
	In the right panel  we investigate for fixed investment horizon
	$T=1$ the dependence of $\underline{\modellparameter}^{\HF}$ on  the drift
	volatility $\sigma_{\mu}$   and the power utility parameter
	$\theta$. While for $\theta<0$ there are no further restrictions on
	the parameters, this is no longer true for $\theta\in(0,1)$. The
	larger  the volatility $\sigma_{\mu}$ the smaller one has to choose
	$\theta$.
	
	The bottom two panels illustrate the influence of stock  volatility
	$\volR$  on the subset $\underline{\modellparameter}^{\HF}$ . In contrast to the
	volatility  $\sigma_{\drift}$ of the drift, smaller values of
	$\sigma_{\HR}$  imply  that the optimization problem is well-posed
	only for smaller $T$  as it can be seen in the bottom left panel. If
	$\sigma_{\HR}$ does not exceed the critical value
	$\sigma_{\HR}^E\approx 0.22$, then the optimization problem is
	well-posed only up to a critical  investment horizon
	$T^E=T^E(\sigma_{\HR})$ that depends on $\sigma_{\HR}$. The larger
	$\volR$, the larger the horizon can be set. However, for  $\volR$
	exceeding the critical value $\sigma_{\HR}^E$ the control problem is
	well defined for any horizon time  $T>0$. The bottom right panel
	shows the dependence of  $\underline{\modellparameter}^{\HF}$ on the  two
	volatilities $\volR$ and $\sigma_{\mu}$. Note that the two regions
	are separated by a straight line as it can be deduced from \eqref{Diskriminante_Barier} and 
	\eqref{t_Explosion_AV}. 
	
	Finally, we consider the case of a partial informed investor. Then  an additional condition  on the  covariance process
	$\Qpro^H$ of the filter has to imposed to ensure well posedness. We refer to Corollary \ref{suff_cond_partial} for the multi-dimensional case  and to Corollary \ref{suff_cond_d1} and \eqref{modellparameter_partial_d1} for the special case of markets with a single risky asset considered here.  The sufficient condition requires  that
	$\Qpro^H_t<1/(2{ (1-\theta)}\Abound(t))$ which is satisfied for the  model parameters from Table
	\ref{parameter}.  First, it is known that  $\Abound$ is decreasing on $[0,T]$. Second, properties of the conditional covariance process (see \cite{Gabih et al (2022) PowerFixed,Sass et al (2017),Sass et al (2021)}) imply  $\Qpro^H_t\le \Qpro^R_t$. Further, for $\condcovinitial>\Qpro_\infty^R=\lim_{t\to\infty} \Qpro_t^R$ we have that $\Qpro^R$ is decreasing.  According to  \cite[Prop.~4.6]{Gabih et al (2014)} it holds $\Qpro_\infty^R=\sigma_R\sqrt{\sigma_R^2\kappa^2 +\sigma_\mu^2} -\kappa \sigma_R^2=0.125 $ and thus   $\Qpro_\infty^R<\condcovinitial=0.1\overline{6}$ which yields that  we have  $\Qpro^H_t\le \condcovinitial\le \frac{1}{2{ (1-\theta)}\Abound(t)}$ since $\frac{1}{2{ (1-\theta)}\Abound(t)}\ge \frac{1}{2{ (1-\theta)}\Abound(0)}={ 0.964}$. This shows that the problem is well-posed.
	
	\section*{Conclusion}
	The paper derived sufficient conditions for the  well posedness of power utility maximization problems under full and partial information on the not directly observable drift of risky assets.  For power utility with relative risk aversion smaller than that of log-utility these conditions ensure the absence of nirvana strategies as well as  bounded value functions arising in the solution with dynamic programming techniques. They  lead to restrictions on the choice of the model parameters  such as the investment horizon and  the risk aversion parameter of the power utility function, parameters controlling the variance of the asset price and drift processes.
	For the fully informed investor well posedness does not depend on the choice of the parameters $x_0,\driftinitial,\covinitial$ defining the initial values of the state process $Y^F=(X^\pi,\drift)$. However, and somewhat surprisingly,  for the partially informed investors ($H=R, Z_n,Z_\lambda,J$), well posedness is only guaranteed  if  one component of the state process $Y^H=(X^\pi,\Mpro^H,\Qpro^H)$, namely the filter process $\Qpro^H$ of conditional covariance is ``sufficiently small''  on the entire time interval $[0,T]$. For the regimes $H=R,J$ and for $H=Z_n$ with discrete-time expert opinions at fixed arrival times, $\Qpro^Z$ is a deterministic function  and is therefore fully specified by its initial value $\condcovinitial$. However, for the regime  $H=Z_\lambda$ with discrete-time expert opinions at random arrival times, $\Qpro^Z$ is a stochastic process that depends on the random arrival times and therefore not only specified by its initial value $\condcovinitial$.
	
	The well posedness conditions are related to non-explosive solutions of certain terminal value problems for matrix Riccati differential equations on the time interval $[0,T]$. They become more explicit for financial markets with a single risky asset. For that case the paper provides numerical results and visualizes the set of feasible model parameters.	For the actual solution of the analyzed portfolio optimization problems  we refer to  our papers \cite{Gabih et al (2022) PowerFixed,Gabih et al (2022) PowerRandom}.

	\begin{appendix}
		\section{Proof of Lemma \ref{lem:reward_condexp}}
		\label{proof_em:reward_condexp}			
		To prove the claim of this lemma for the information regimes with partial information, i.e.,  $H=R, Z_n,Z_\lambda ,J$, it is helpful to consider an auxiliary  stochastic process obtained by adding the hidden drift $\mu$ to the state process $Y^H$, and to define $\stateall^H=(Y^H,\mu)=(X^\pi,\Mpro^H,\Qpro^H,\mu)$ taking values in $\statespaceall=\mathcal{Y}^H\times \R^d=(0,\infty)\times \R^d\times \R^{d\times d} \times \R^d$. Let $\mathbb{F}^{H,\stateall}=(\mathcal{F}^{H,\stateall}_t)_{t\in[0,T]}$ with $\mathcal{F}^{H,\stateall}_t=\sigma(\stateall^H_s,~ s\le t),$ be the filtration generated by  $\stateall^H$. Then it holds
		\begin{lemma}\label{lem:Markov_augmented_state}
			For $H=R, Z_n,Z_\lambda ,J$ the process $\stateall^H$ is a Markov process, i.e., for any measurable function $g:\statespaceall\to \R$ it holds for all $s,t$ with  $0\le t\le s\le T$  
			\begin{align}\label{Markov_augmented_state}
				\E\big[g(\stateall^H_s)\,|\, \mathcal{F}_t^{H,\stateall}\big] = \E\big[g(\stateall^H_s)\,|\,\stateall^H_t\big].
			\end{align}
			
		\end{lemma}
		\begin{proof}			
			For the information regimes $H=R,J$ and the regime $H=Z_n$  the filter process $\Mpro^H$ describing the conditional mean of the hidden drift,  satisfies a SDE of the form, see \cite[Lemma 3.1--3.3]{Gabih et al (2022) PowerFixed}
			\begin{align}
				\label{Filter_M_int_gen}
				d\Mpro_t^{H}& =\alphamm(\Mpro_t^{H})\, dt
				+\betam^{H}(\Qpro_{t}^{H})\,d\widetilde{W}_t^{H},\quad \Mpro_0^{H}=\filterinitial,
			\end{align}
			driven by an $\mathbb{F}^H$-Brownian motion $\widetilde W^H$, the so-called  innovations process. For $m\in\R^d$ and $q\in\R^{d\times d}$ the drift coefficient is given by  $\alphamm(m)=\revspeed(\revlevel-m)$	and the diffusion coefficient by $ \betam^{H}(q)= q\Sigma_{\HR}^{-1/2}$, for $H=\HR, Z_n$ and by $\betam^{H}(q)=	q(\Sigma_{\HR}^{-1/2},\Sigma_{J}^{-1/2}),$ for $ H=\HD$. 			
			For $H=Z_n$ the above SDE describes the dynamics only between two arrival dates
			$t_{k-1}$ and $t_k, k=1,\ldots,n,$ whereas  at the arrival dates $t_k$  there are jumps of size $\Mpro_{t_k}^{\HC}-\Mpro_{t_k-}^{\HC} =(I_d-\rho_k)(Z_k-\Mpro_{t_k-}^{\HC})$, with the update factor	$\rho_k=\varianceexp(\Qpro_{t_k-}^{\HC}+\varianceexp)^{-1}$.
			
			For the information regimes mentioned above, the filtering process $\Qpro^H$, which describes the conditional variance of the hidden drift, is deterministic and satisfies	the Riccati differential equation, see  \cite[Lemma 3.1--3.3]{Gabih et al (2022) PowerFixed}. Here, too, the Riccati equation for $H= Z_n$ defines the dynamics only between two information times, while jumps of deterministic size occur at the information times.

			For the regime $H= Z_\lambda$ with random information dates the filter process $\Qpro^H$ is a piecewise deterministic stochastic process with jumps at the random information dates, and the filter process $\Mpro^H$ is now a jump-diffusion process. In \cite[Lemma 3.4]{Gabih et al (2022) PowerRandom} the dynamics of the two processes are given as
			\begin{align}
				d\Mpro^H_t&=\alphamm(\Mpro_t^{H})\,dt
				+\betam^H(\Qpro_t)\,d\widetilde{W}^H_t
				+\int\nolimits_{\mathbb R^{\nAktien}} \gammam(\Qpro^H_{t-},u)\; \komppoi(dt,du),\\
				%
				d\Qpro^H_t&=\alphaqq(\Qpro^H_{t})\,
				dt+\int\nolimits_{\mathbb R^{\nAktien}} \gammaq(\Qpro^H_{t-})\;	\komppoi(dt,du),	
			\end{align}
			with initial values $\Mpro^H_0=\filterinitial$,	$\Qpro^H_0=\condcovinitial$.
			As described above, the SDE for $\Mpro^H$ is driven by the innovation process $\widetilde W^H$, which is a  $\mathbb{F}^H$ Brownian motion, as well as by the compensated random measure $\komppoi$, which is associated with a homogeneous Poisson process with standard normally distributed jumps, determining the arrival times and the values of the expert opinions.
			The coefficients $\alpha$ and $\beta^H$ are as above. For details about the other coefficients ($\alphaqq,\gammam,\gammaq$) we refer to \cite[Lemma 3.4]{Gabih et al (2022) PowerRandom}.
			
			Once the dynamics of the filtering processes have been determined, the dynamics of the first component of $\stateall^H$, i.e.,  the wealth process $X^\pi$ specified in \eqref{wealth_phys} as  ${dX_t^{\pi}}/{X_t^{\pi}}=\pi_t^{\top}dR_t$, can be rewritten. To this end, the SDE \eqref{ReturnPro} for the return process $R$ is expressed in terms of the
			innovation process $\widetilde{W}$. 
			This yields the $\mathbb{F}^H$-semimartingale decomposition of $\wealth^{\pi}$, see \cite[Section 4.2]{Gabih et al (2022) PowerFixed} and \cite[Section 4.2]{Gabih et al (2022) PowerRandom}
			\begin{align}
				\frac{d\wealth_t^{\pi}}{\wealth_t^{\pi}}=\pi^{\top}_t\Mpro_t^{H}+\pi^\top_t\sigma_{\wealth}^{H}\;d\widetilde{W}_t^{H},\quad\wealth_0^{\pi}=x_0,
			\end{align}
			where $\sigma_{\wealth}^{R}=\sigma_{\wealth}^{\HC}= \Sigma_R^{1/2}$   and $\sigma_{\wealth}^{\HD}=(\Sigma_R^{1/2},~0_{\nAktien\times\nAktien})$.
			
			Finally, we recall the SDE \eqref{drift} for the hidden drift process $\mu$, which is driven by the Brownian motion $W^\mu$ and forms the fourth component of $\stateall^H$. In summary, the stochastic process $\stateall^H$ is the solution to a coupled system of SDEs driven by the Brownian motions $\widetilde W, W^\mu$, and for $H= Z_\lambda$  additionally driven by the compensated Poisson random measure $\komppoi$.    Furthermore, for the regime $H= Z_n$ with fixed information dates, the filter process $\Mpro^{Z_n}$ is updated at the these dates, and the process  $\Qpro^{Z_n}$ is deterministic. The update of $\Mpro^{Z_n}$ at time $t_k$ depends solely on its value $\Mpro^{Z_n}_{t_k-}$ before the jump, and of the random variable $Z_k$ that is independent of $\mathcal{F}^{H,\stateall}_{t_k-}$.  
			
			Summarizing,  $\stateall^H$ is a Markov process and, as such, satisfies the relation \eqref{Markov_augmented_state}.

		\end{proof}

		\begin{proof} (of Lemma \ref{lem:reward_condexp})
			We start with the proof for information regimes with partial information.
			
			\paragraph{Information regime  $\mathbf{H=R, Z_n,Z_\lambda ,J}$} We apply  Lemma \ref{lem:Markov_augmented_state} and set $s=T$ and $g(\stateall^H_T)=\utility_{\theta}(X_T^\pi)$, i.e., $g$ depends solely on the first component of $\stateall^H_T$. Then for any fixed  $t\in[0,T]$ and $\pi\in\mathcal{A}^{H}$ it holds
			\begin{align}
				\E\big[\utility_{\theta}(X_T^\pi)\,\big|\, \mathcal{F}_t^{H,X}\big] &
				= \E\big[\,\E[\utility_{\theta}(X_T^\pi)\,|\,\mathcal{F}^{H,\stateall}_t] \,\big|\,\mathcal{F}_t^{H,X}\big]\\
				&= \E\big[\,\E[\utility_{\theta}(X_T^\pi)\,|\, \stateall^H_t] \,\big|\,\mathcal{F}_t^{H,X}\big] = \E\big[\, h(\stateall^H_t) \,\big|\,\mathcal{F}_t^{H,X}\big],
				\label{hu1}
			\end{align}
			where $h:\statespaceall\to \R$ is some  measurable function that exists by  the Doob-Dynkin lemma. In the first line above, we used $\mathcal{F}_t^{H,X}\subset \mathcal{F}^\stateall_t$ and the tower property of conditional expectations, whereas in the second line we used the Markov property from \eqref{Markov_augmented_state}.
			
			We recall that $\stateall^H_t=(X^\pi_t,\Mpro^H_t,\Qpro^H_t,\mu_t)$, and that the triple $(X^\pi_t,\Mpro^H_t,\Qpro^H_t)$ is $\mathcal{F}_t^{H,X}$-measurable. Furthermore, it is known from Kalman filter theory that the conditional distribution of the hidden drift $\mu_t$ given $\mathcal{F}_t^{H,X}$ is the Gaussian distribution $\mathcal{N}(\Mpro^H_t,\Qpro^H_t)$. Therefore, there exists a $\mathcal{G}$-measurable, standard normally distributed random variable $\eps\sim \mathcal{N}(0_d,I_d)$,  that is independent of $\mathcal{F}_t^{H,X}$, such that $\mu_t=\Mpro^h_t+(\Qpro^H_t)^{1/2} \eps$. Plugging this into \eqref{hu1} we obtain 
			\begin{align}
				\E\big[\utility_{\theta}(X_T^\pi)\,\big|\, \mathcal{F}_t^{H,X}\big] & = \E\big[\, h(\stateall^H_t) \,\big|\,\mathcal{F}_t^{H,X}\big] 
				= \E\big[\, h(X^\pi_t,\Mpro^H_t,\Qpro^H_t,\mu_t) \,\big|\,\mathcal{F}_t^{H,X}\big]  	= \widetilde h(X^\pi_t,\Mpro^H_t,\Qpro^H_t)				
			\end{align}
			where the function $\widetilde  h:\mathcal{Y}^H\to\R$  is given as $\widetilde  h(y) = \widetilde h(x,m,q)= \E[\, h(x,m,q,m+q^{1/2} \eps)]  $ for $y=(x,m,q)\in\mathcal{Y}^H$. Since the above relation holds for every $t\in[0,T]$ and every $\pi\in\mathcal{A}^{H}$, we can set the function $\rewardstate^H(t,y;\pi)$ equal to $\widetilde h(y)$, which proves the claim.

			\paragraph{Information regime $\mathbf{H=F}$} In this regime we have $Y^H=Y^F= (X^\pi,\mu)$, $\mathcal {F}_t^{F}=\sigma(R_s, \mu_s,~ s\le t)$, and thus $\mathcal {F}_t^{F,X}=\sigma(R_s, \mu_s, X^\pi_s,~ s\le t)$. The state process $Y^F$ is a Markov process, since it is a solution to SDEs  \eqref{drift} and \eqref{wealth_phys}, which are driven by the Brownian motions $W^\mu$ and $W^R$. It therefore holds
			$	\E\big[\utility_{\theta}(X_T^\pi)\,\big|\, \mathcal{F}_t^{Y^F}\big] = \E\big[\utility_{\theta}(X_T^\pi)\,\big|\,Y^F_t \big].$
			Note that  SDE \eqref{wealth_phys} can be expressed without the use of the return process $R$ as ${dX_t^{\pi}}/{X_t^{\pi}}= 
			\pi_t^{\top}\mu_t\; dt+\pi_t^{\top}\volR\; dW_t^{R}$. Thus 
			the  SDEs  \eqref{drift} and \eqref{wealth_phys} defining $Y^F$ do not depend on the return process $R$,  yielding that  the pair $Y^F=(X^\pi,\mu)$ is a sufficient statistic of the triple $(X^\pi,\mu,R)$ for the inference of $\E\big[\utility_{\theta}(X_T^\pi)\,|\, \mathcal{F}_t^{F,X}\big]$, as the conditional distribution of $X^\pi_T$ given $\mathcal {F}_t^{H,X}$  does not change if we replace $\mathcal {F}_t^{F,X}$ with $\mathcal {F}_t^{Y^F}$. Thus, we obtain 	$	\E\big[\utility_{\theta}(X_T^\pi)\,\big|\, \mathcal{F}_t^{F,X}\big] = \E\big[\utility_{\theta}(X_T^\pi)\,\big|\,Y^F_t \big]$, which concludes the proof.
			
		\end{proof}

		\section{Proof of Lemma \ref{Helplemma_Psi_Expect}}
		\label{proof_lemma_Helplemma_Psi_Expect}
		\begin{proof}
			Consider first the function $\Dfun\in\mathcal C^{1,2}$ defined
			as follows
			\begin{align}
				{\Dfun}:[0,T]\times \mathbb R^{+}\times\mathbb R^{\nAktien}\rightarrow \mathbb R_{+};\qquad
				\Dfun (t,z,\filter): 
				= \E \left[\Psi_T^{t,z,\filter}\right].
				\label{Def_Psi}
			\end{align}
			Then it holds $\Dfun (T,z,\filter)= \E \left[\xi_T^{T,z,\filter}\right]=\E \left[z\right]=z$.\\
			The dynamics of the drift $\drift$ and the process $\Psi$ for $s\in[t,T]$
			read as
			\begin{align}
				\begin{pmatrix} d\mu_s^{t,\filter} \\ d\Psi_s^{t,z,\filter} \end{pmatrix}
				=
				\begin{pmatrix} \revspeed(\revlevel-\drift_s^{t,\filter}) \\ \cpsi\Psi_s^{t,z,\filter}\big(\drift_s^{t,\filter}\big)^{\top} \Sigma_{\HR}^{-1} \drift_s^{t,\filter} \end{pmatrix} ds
				+
				\begin{pmatrix} \sigma_{\mu}  \\ {0}_{1\times\nWienerDrift} \end{pmatrix}
				dW_s^{\mu}
				;\quad
				\begin{pmatrix} \drift_t^{t,\filter} \\ \Psi_t^{t,z,\filter} \end{pmatrix}
				=
				\begin{pmatrix} \filter \\ z \end{pmatrix}.
				\nonumber
			\end{align}
			The drift and the diffusion coefficients of the last equation satisfy the Lipschitz- and linear growth conditions. Moreover,
			the Feynman-Kac-Formula for the expectation from \eqref{Def_Psi}
			leads to the following partial differential equation for $\Dfun $
			\begin{align}
				0=\frac{\partial }{\partial t}\Dfun (t,z,\filter)+\nabla_{\filter}^{\top}\Dfun (t,z,\filter)\; \revspeed(\revlevel-\filter) &+\frac{1}{2}\trace\{\nabla_{\filter\filter}\Dfun (t,z,\filter) \Sigma_{\drift}\}
				\nonumber\\    &
				+ \cpsi z\filter^{\top}\Sigma_R^{-1} \filter \frac{\partial \Dfun(t,z,\filter)}{\partial z}
				\label{Cauchy_psi}
			\end{align}
			with  $\Dfun (T,z,\filter)=z$ as terminal condition  and $\nabla_{\filter}$ and $\nabla_{\filter\filter}$ denoting gradient and Hessian, respectively.  The above terminal value problem can be solved with the following separation ansatz
			\begin{align}
				\Dfun (t,z,\filter)=z\; d(t,\filter) ,\quad d(T,\filter)=1.
				\label{subst_phi_0}
			\end{align}
			At time $t$ we have $\drift_t^{t,\filter}=\filter$ and $\Psi_t^{t,1,\filter}=1$ so that we obtain 
			$\E \big[\Psi_T^{t,1,\filter}\big]=\Dfun (t,1,\filter)= d(t,\filter)$
			which is the function $d$ defined in  Lemma \eqref{Helplemma_Psi_Expect}.
			Plugging \eqref{subst_phi_0} into \eqref{Cauchy_psi} leads to the following linear parabolic PDE for $d$
			\begin{align}
				\nonumber
				0=\frac{\partial}{\partial t}d(t,\filter)+\nabla_{\filter}^{\top} d(t,\filter)\; \revspeed(\revlevel-\filter)& +\frac{1}{2}\trace\{\nabla_{\filter\filter} d(t,\filter) \Sigma_{\drift}\}+\cpsi\filter^{\top}\Sigma_R^{-1}\filter \; d(t,\filter),
			\end{align}
			with terminal value $d(T,\filter)=1$. For solving the above PDE the ansatz
			\begin{align}
				\nonumber
				d(t,\filter)=\exp\big\{\filter^{\top}\Abound(t)\filter+\Bbound^{\top}(t)\filter+\Cbound(t)\big\}
			\end{align}
			leads to the system of ODEs  for  $\Abound$, $\Bbound$ and
			$\Cbound$, which are given in \eqref{A_bound}, \eqref{B_bound} and \eqref{C_bound}.  \myqed
		\end{proof}
		\section{Proof of Theorem \ref{theo_bound_V}}
		\label{proof_lemma_lemma_bound_V}
		\begin{proof}
			In the proof we follow an approach presented in  Angoshtari \cite[Theorem 1.8]{Angoshtari2013}.
			Without loss of generality we give the proof for $t=0$ and show that it holds  $V_\para ^G(0,y)\leq \frac{x^{\theta}}{\theta} \,
			d^{1-\theta}( 0,m)$ for all $y=(x,m)\in \mathcal{Y}^G=(0,\infty)\times \R^d$.

			Let $(\xi_t)_{t\in[0,T]}$ be a stochastic process satisfying the SDE
			\begin{align}
				d\xi_t= -\xi_t \drift_t^{\top}\;\Sigma_{\HR}^{-1}\volR\;dW_t^{\HR} ,\quad \xi_0=1, \; ~ \drift_0=m ,
				\label{xi_prozess_DGL}
			\end{align}
			with the solution
			\begin{align}
				\xi_t=\exp\Big\{ -\frac{1}{2}\int\nolimits_0^t \| \drift_s^{\top}\;\Sigma_{\HR}^{-1}\volR \|^2 ds -\int\nolimits_0^t  \drift_s^{\top}\;\Sigma_{\HR}^{-1}\volR\;dW_s^{\HR}\Big\}.
				\label{xi_prozess}
			\end{align}
			For $t_0\in[0,T]$ we denote by $\drift_t^{t_0,{m} }$ the solution to the  SDE \eqref{drift} for the drift process $\drift$
			starting at time $t_0$ with initial value ${m} $, by $\welth_t^{\pi,t_0,x,{m} }$ the solution to the wealth equation \eqref{wealth_phys}
			with initial values $(x,{m} )$ and by $\xi_t^{t_0,z,{m} }$ the solution of \eqref{xi_prozess_DGL}
			at time $t$ with initial values $(z,{m} )$.
			Applying It\^{o}'s-formula it holds
			\begin{align}
				d(\welth^{\pi,0,x,{m}  }~\xi^{0,1,{m}  })_t&=\welth_t^{\pi,0,x,{m}  }~d\xi_t^{0,1,{m}  }+\xi_t^{0,1,{m}  }\;d\welth_t^{\pi,0,x,{m}  }+d\langle \welth^{\pi,0,x,{m}  },\xi^{0,1,{m}  }\rangle_t\nonumber\\
				&=\xi_t^{0,1,{m}  }~\welth_t^{\pi,0,x,{m}  }[\pi_t^{\top}\volR-\drift_t^{0,{m}  }\Sigma_{\HR}^{-1}\volR]\;dW_t^{\HR}.
			\end{align}
			Moreover, Fatou's Lemma implies that the non-negative process $(\welth^{\pi}\xi)_t$ is a
			supermartingal, and as a consequence it holds
			\begin{align}
				x-\E [\welth_T^{\pi,0,x,{m}  }~\xi_T^{0,1,{m}  } ]\geq0.
				\label{super_M1}
			\end{align}
			From the other hand let $f:\mathbb R^+\rightarrow \mathbb R$ be the associated Legendre-Fenchel transformation of the utility function
			$ \utility_{\theta}(x)$
			defined for every $w>0$ by
			\begin{align}
				\label{Konjugierte}
				f(w):=\sup\limits_{x\in\mathbb R^+}\left\{ \utility_{\theta}(x)-xw\right\}=
				\frac{1-\theta}{\theta}w^{-\frac{\theta}{1-\theta}}.
			\end{align}
			Since $\xi_T^{0,1,{m}  }>0$, it holds for every $w>0$
			\begin{align}
				\label{Konj_Disk}
				f(\xi_T^{0,1,{m}  }~ w) =\sup\limits_{x\in\mathbb R^+} \Big\{\utility_{\theta}(x)-x\;\xi_T^{0,1,{m}  }\; w\Big\} \geq \utility_{\theta}(\welth_T^{\pi,0,x,{m}  })-\welth_T^{\pi,0,x,{m}  }\; \xi_T^{0,1,{m}  }\; w.
			\end{align}
			Now for $w>0$ and  $y=(x,m)\in \mathcal{Y}^G=(0,\infty)\times \R^d$  inequality \eqref{super_M1} implies that
			\begin{align}
				\rewardstate_\para ^G(0,y;\pi) = \E \big[\utility_{\theta}(\welth_T^{\pi,0,x,{m}  })\big]&
				\leq \E \big[ \utility_{\theta}(\welth_T^{\pi,0,x,{m}  }) \big]+w \big(x-\E \big[\welth_T^{\pi,0,x,{m}  }~\xi_T^{0,1,{m}  }\big]\big)\nonumber\\
				&=\E \big[\utility_{\theta}(\welth_T^{\pi,0,x,{m}  })-\welth_T^{\pi,0,x,{m}  }\;\xi_T^{0,1,{m}  }\; \text{$w$}\big]+x w\nonumber\\
				& \leq \E \Big[f(\xi_T^{0,1,{m}  }~w)\Big]+x \text{$w$},\nonumber
			\end{align}
			where the last inequality follows from \eqref{Konj_Disk}.
			For the term $f(\xi_T^{0,1,{m}  }~w)$  we now apply
			\eqref{Konjugierte} to obtain
			\begin{align}
				\rewardstate_\para ^G(0,y;\pi)  
				&	\leq\frac{1-\theta}{\theta} w^{-\frac{\theta}{1-\theta}}
				\E \Big[\big(\xi_T^{0,1,{m}  }\big)^{-\frac{\theta}{1-\theta}}\Big]+xw.
				\label{Zielfkt_Absch2}
			\end{align}
			Since the last inequality holds for every admissible strategy $\pi\in\mathcal A^{\HG}$ and for every $w>0$,
			we can take the supremum over all strategies $\pi\in\mathcal
			A^{\HG}$ on the left-hand side and the infimum  over all $w>0$ in the
			right-hand side to obtain
			\begin{align}
				V_\para ^G(0,y)   =\sup\limits_{\pi \in\mathcal A^{\HG}} \rewardstate_\para ^G(0,y;\pi)
				&\leq   \frac{x^{\theta}}{\theta}\
				\Big( \E \big[\big(\xi_T^{0,1,{m}  }\big)^{-\frac{\theta}{1-\theta}}\big] \Big)^{1-\theta}.
				\label{Zielfkt_Absch3}
			\end{align}
			The problem is now reduced to investigate if  the expectation in the r.h.s. of \eqref{Zielfkt_Absch3}
			is bounded. It holds
			\begin{align}
				\big(\xi_T^{0,1,\filter}\big)^{-\frac{\theta}{1-\theta}}&=\exp\Big\{ \frac{\theta}{1-\theta}\Big(
				\frac{1}{2}\int\nolimits_0^T \big\| \big(\drift_s^{0,\filter}\big)^{\top}\Sigma_{\HR}^{-1}\volR \big\|^2\;ds
				+\int\nolimits_0^T\big(\drift_s^{0,\filter}\big)^{\top}\Sigma_{\HR}^{-1}\volR \; dW_s^{\HR} \Big)\Big\}
				\\&
				=\Lambda_T \cdot \Psi_T^{0,1,\filter},				
			\end{align}
			where $\Psi_T^{0,1,\filter}$ is given in \eqref{Psi_pro} with $\cpsi=\frac{\theta}{2(1-\theta)^2}$. The term $\Lambda_T$
			is given by
			\begin{align}
				\Lambda_T=
				\exp\Big\{\int\nolimits_0^T \frac{\theta}{1-\theta} \big(\drift_s^{0,\filter}\big)^{\top}\Sigma_{\HR}^{-1}\volR \; dW_s^{\HR}
				-\frac{1}{2}\int\nolimits_0^T \big\|\frac{\theta}{1-\theta} \big(\drift_s^{0,\filter}\big)^{\top}\Sigma_{\HR}^{-1}\volR \big\|^2\;ds\bigg\}.
				\nonumber
			\end{align}
			We now introduce a new probability measure ~$\P^{\ast}$ given by $\frac{d\P^{\ast}}{d\P~}=\Lambda_T$
			so that the expectation from \eqref{Zielfkt_Absch3} can be expressed as
			\begin{align}
				\E \Big[\big(\xi_T^{0,1,\filter}\big)^{-\frac{\theta}{1-\theta}} \big]&
				=\E \big[\Lambda_T \cdot \Psi_T^{0,1,\filter}\big]
				=\E ^{\ast}\Big[\Psi_T^{0,1,\filter}\Big]=d(0,\filter).
				\nonumber
			\end{align}
			This expectation can be expressed according to \eqref{Psi_4} in
			Lemma \ref{Helplemma_Psi_Expect}  and its proof given in Appendix \ref{proof_lemma_Helplemma_Psi_Expect},  where  $\E ^{\ast}$ denotes the expectation  under the new probability measure.
			\myqed
		\end{proof}		
		\section{Proof of Lemma \ref{moment_quadratic_form}}
		\label{proof_moment_quadratic_form}
		\begin{proof}		
			\paragraph{First claim}
			Note that the product of the symmetric matrices $\Sigma$ and $U$ needs not to be symmetric, the latter would immediately imply real eigenvalues. 
			
			Since $\Sigma$ is positive semidefinite there exists a $d\times d$-matrix $\Sigmafac$ such that   $\Sigma=\Sigmafac\Sigmafac^{\top}$. It is well-known that for $d\times d$ matrices $A,B$ it holds that $AB$ and $BA$ have the same eigenvalues.
			Setting $A=P$ and $B=P^\top U$ it follows that $\Sigma U=PP^\top U$ and $P^\top U P$ have the same eigenvalues. They are real since $P^\top U P$ is symmetric. 
			
			\paragraph{Second claim}			
			The decomposition    $\Sigma=\Sigmafac\Sigmafac^{\top}$ allows the representation $Y=\Sigmafac Z$		
			with an $d$-dimensional standard normally distributed random vector $Z=\mathcal N(0_d,I_d)$, since the mean of $\Sigmafac Z$ is $\E[\Sigmafac Z]=0_d$ and its covariance matrix is $\E[\Sigmafac ZZ^\top\Sigmafac^\top]=PI_dP^\top=\Sigma$. Then, we have
			$Y^{\top}{U}Y +a^{\top}Y=Z^{\top}{\Sigmafac}^{\top}{U}{\Sigmafac}Z+a^{\top}\Sigmafac Z,		$
			so that it holds
			\begin{align}
				\E[\,e ^\quadform\,]=\E[\exp{(Y^{\top}{U}Y +a^{\top}Y)}]&=\E[\exp{(Z^{\top}{\Sigmafac}^{\top}{U}{\Sigmafac}Z+a^{\top}\Sigmafac Z})]\\
				&=\frac{1}{(2\pi)^{d/2}}\int_{\R^d}\exp({z^{\top}{\Sigmafac}^{\top}{U}\Sigmafac z+a^{\top}{\Sigmafac}z-\frac{1}{2}z^{\top}z})dz.
				\label{integrand}
			\end{align} 
			The eigenvalues of $\Sigma U$ are real and satisfy by assumption  $\lambda_i<\frac{1}{2}, i=1,\ldots,n$. This implies that the eigenvalues of  $K=I_d-2\Sigma{U}$ which are   given by $1-2\lambda_i$ are real and  positive, and $K$ is invertible.
			We define the $d\times d$ matrix $G=I_d-2{\Sigmafac}^{\top}{U}\Sigmafac$. Recall that the matrices $\Sigma U$ and $\Sigmafac^{\top}U\Sigmafac$ have the same  eigenvalues. Thus  $G$ is also invertible since its eigenvalues are given by $1-2\lambda_i>0$. Further, it holds 
			\begin{align}
				\label{det_identity}
				\det(G)  =\det(K)= \prod_{i=1}^d(1-2\lambda_i).   
			\end{align}      
			Rearranging terms in the integral  of Equation \eqref{integrand} yields
			\begin{align}
				z^{\top}{\Sigmafac}^{\top}{U}\Sigmafac z+a^{\top}{\Sigmafac}z-\frac{1}{2}z^{\top}z
				&=-\frac{1}{2}(z^{\top}Gz-2z^{\top}\Sigmafac^{\top}a)=\frac{1}{2}a^{\top}\Sigmafac G^{-1}{\Sigmafac}^{\top}a-\frac{1}{2}(z-b)^{\top}G(z-b),
			\end{align} 
			with $b=G^{-1}{\Sigmafac}^{\top}a$. 
			Using that $h(z)=(2\pi)^{- d/2} (\det(G^{-1}))^{-1/2}\exp\{-\frac{1}{2}(z-b)^{\top}G(z-b\}$ is the probability density function of the non-degenerate Gaussian distribution $\mathcal{N}(b,G^{-1})$ with the normalization $\int_\R h(z)\,dz=1$, it follows from \eqref{integrand}	and \eqref{det_identity}	
			\begin{align}
				\E[\,e^\quadform\,]
				&=(\det(G))^{-1/2}\exp\Big\{\frac{1}{2}a^{\top}\Sigmafac G^{-1}{\Sigmafac}^{\top}a\Big\}
				=(\det(K))^{-1/2}\exp\Big\{\frac{1}{2}a^{\top} K^{-1}\Sigma a\Big\}.
			\end{align} 			
			For the last equality we have applied the fact that $\Sigmafac G^{-1}{\Sigmafac}^{\top}=K^{-1}\Sigma$. The proof of this equality is based on the fact that for $d\times d$-matrices $A$ and $B$ which are such that  $C=I_d-AB$ is invertible,  it holds 
			$$(I_d-AB)^{-1}=I_d+A(I_d-BA)^{-1}B.$$ 
			This can easily seen by verifying the defining property of an inverse matrix, i.e., $CC^{-1}=C^{-1}C=I_d$.
			Setting  $A=\Sigmafac^{\top}U$ and $B=\Sigmafac$ we obtain the identity $\Sigmafac G^{-1}{\Sigmafac}^{\top}=K^{-1}\Sigma$ from
			\begin{align}
				G^{-1}=(I_d-2P^\top UP)^{-1} = I_d+2\Sigmafac^{\top}U(I_d+2PP^\top U)^{-1} \Sigmafac = I_d+2\Sigmafac^{\top}UK^{-1} \Sigmafac
			\end{align}
			and finally 
			\begin{align}
				\Sigmafac G^{-1}{\Sigmafac}^{\top}&=\Sigmafac(I_d+2\Sigmafac^{\top}UK^{-1} \Sigmafac)\Sigmafac^{\top}=\Sigmafac \Sigmafac^{\top}+2\Sigmafac \Sigmafac^{\top}UK^{-1}\Sigmafac \Sigmafac^{\top}=\Sigma+2\Sigma UK^{-1}\Sigma\nonumber\\
				&=(I_d+2\Sigma UK^{-1})\Sigma
				=(K+2\Sigma U)K^{-1}\Sigma=I_dK^{-1}\Sigma.
				\label{matrix_equality}
			\end{align}
			
			\paragraph{Third claim} The first identity in \eqref{quadratic form_eigenvalue} was already proven in \eqref{det_identity}. For the second identity we use $\Sigmafac^{\top}U\Sigmafac= D\Lambda D^\top$ and that  the matrix D is  orthogonal, i.e., $DD^\top=D^\top D=I_d$. Then, according to \eqref{matrix_equality} it holds
			\begin{align*}
				K^{-1}\Sigma&=\Sigmafac G^{-1}{\Sigmafac}^{\top}=\Sigmafac (I_d-2\Sigmafac^{\top} U\Sigmafac)^{-1}\Sigmafac^{\top}=\Sigmafac (DD^{\top}-2DD^{\top}\Sigmafac^{\top} U\Sigmafac DD^{\top})^{-1}\Sigmafac^{\top}\\
				&=\Sigmafac \big(D(I_d-2D^{\top}\Sigmafac^{\top} U\Sigmafac D)D^{\top}\big)^{-1}\Sigmafac^{\top}=\Sigmafac \big(D(I_d -2\Lambda) D^{\top}\big)^{-1}\Sigmafac^{\top}
				=\Sigmafac D(I_d -2\Lambda)^{-1}D^{\top}\Sigmafac^{\top}.
			\end{align*}
			Using $c=D^{\top}\Sigmafac^{\top}a$  we obtain
			\begin{align}
				a^{\top}K^{-1}{\Sigma} a= a^\top\Sigmafac D(I_d -2\Lambda)^{-1}D^{\top}\Sigmafac^{\top}a 
				= c^\top(I_d -2\Lambda)^{-1} c=\sum\limits_{j=1}^d c_j^2(1-2\lambda_j)^{-1}.			
			\end{align}
			\myqed

		\end{proof}

		\section{Proof of Theorem \ref{theorem_partial_Inv}}
		\label{proof_theorem_partial_Inv}
			\begin{proof}
				
				We recall inequality \eqref{bound_value_H} stating 		
				$V_\para^{H}(t,y)\leq
				\frac{x^{\theta}}{\theta}\E \Big[d^{  1-\theta}(t,\drift_t) \big| \Mpro^H_t=m,\Qpro_t^H=q\Big]$, for $H=R, Z_n,Z_\lambda ,J$, and $y=(x,m,q)$.
				For the $H$-investors  the  conditional distribution of $\mu_t$ 
				given $\Mpro^H_t=m,\Qpro_t^H=q$ is the Gaussian distribution $\mathcal N(m,q)$.  Thus we can deduce for the conditional expectation  $$\E [ d^{  1-\theta}(t,\drift_t)  | \Mpro^H_t=m,\Qpro_t^H=q]=  \E \big[ d^{  1-\theta}(t,m+q^{1/2}\varepsilon )  \big]$$
				with a random variable $~\varepsilon\sim\mathcal N(0,I_{\nAktien})$ independent of $\mathcal F_t^{H}$. Substituting into \eqref{bound_value_H}  and using representation \eqref{Psi_4}  we deduce 
				\begin{align}
					V_\para^{H}(t,y)&\leq\E [ d^{ 1-\theta}(t,\drift_t)  | \Mpro_t^H=m,\Qpro_t^H=q] =
					\E \big[ d^{ 1-\theta}(t,m+q^{1/2}\varepsilon )  \big]\nonumber\\
					&=\E \big[\exp\big\{{ (1-\theta)} \big((m+{q}^{1/2}\varepsilon)^{\top} \Abound(t) (m+{q}^{1/2}\varepsilon) +\Bbound^{\top}(t) (m+{q}^{1/2}\varepsilon)  +\Cbound(t)\big)\big\} \big].
				\end{align}
				To simplify the notation we write in the following $A,B,C$ instead of ${ (1-\theta)}\Abound(t)$, ${ (1-\theta)}\Bbound(t)$, ${ (1-\theta)}\Cbound(t)$,  respectively. Rearranging terms yields
				\begin{align}
					V_\para^{H}(t,y)&\le \E \big[\exp\big\{ m^\top A m+B^{\top} m +C\big\} \exp\big\{
					({q}^{1/2}\varepsilon)^{\top} A{q}^{1/2}\varepsilon +(2m^\top A+B^{\top}) {q}^{1/2}\varepsilon \big\} \big]\nonumber\\
					&=d^{ 1-\theta}(t,m)\E \big[ \exp\big\{
					({q}^{1/2}\varepsilon)^{\top} A{q}^{1/2}\varepsilon +(2 Am+B)^{\top} {q}^{1/2}\varepsilon \big\} \big]\nonumber\\					
					&=	{d^{ 1-\theta}(t,m)  \E \big[\exp\{Y^{\top}{A}Y +a^\top Y\}\big]},
					\label{exponents}
				\end{align}
				where $Y=q^{1/2}\varepsilon \sim\mathcal{N}(0_d,\Sigma)$
				is a zero-mean Gaussian random vector with 		covariance matrix $\Sigma=q$ and $a=2Am+B ={ (1-\theta)(2\Abound(t)m+\Bbound(t))}$.    Applying Lemma \ref{moment_quadratic_form} with ${U}=A={ (1-\theta)}\Abound(t)$ and ${K}=I_{\nAktien}-2q A = { I_{\nAktien}- 2(1-\theta)q }\Abound(t)$ 
				yields  
				\begin{align*}
					\E \big[\exp\{Y^{\top}{A}Y\} +a^\top Y\big] &=
					\big(\rm{det}(K)\big)^{-1/2}
					\exp\big\{a^{\top}\;   K^{-1} q \,a\big\},
				\end{align*}
				and substituting this expression into  \eqref{exponents} proves the claim. 
				\myqed
				
			\end{proof}

		\end{appendix}
		
		\smallskip\noindent\textbf{Acknowledgments~}
		The authors thank    Jörn Sass   (RPTU University  Kaiserslautern-Landau)	for valuable discussions that improved this paper.

		\let\oldbibliography\thebibliography
		\renewcommand{\thebibliography}[1]{%
			\oldbibliography{#1}%
			\setlength{\itemsep}{-.85ex plus .05ex}  
		}
		\bibliographystyle{amsplain}

	\end{document}